\relax
\documentclass[letterpaper]{article}
\pdfoutput=1
\usepackage{aaai19}
\usepackage{times}
\usepackage{helvet}
\usepackage{courier}
\usepackage{url}
\usepackage{graphicx}
\frenchspacing
\setlength{\pdfpagewidth}{8.5in}
\setlength{\pdfpageheight}{11in}

\pdfinfo{
/Title (Solving Large Extensive-Form Games with Strategy Constraints)
/Author (Trevor Davis, Kevin Waugh, Michael Bowling)
}
\setcounter{secnumdepth}{2}

\title{Solving Large Extensive-Form Games with Strategy Constraints}
\author{Trevor Davis,\textsuperscript{1}
Kevin Waugh,\textsuperscript{2}
Michael Bowling,\textsuperscript{2,1}\\
\textsuperscript{1}University of Alberta\\
\textsuperscript{2}DeepMind\\
trdavis1@ualberta.ca,
\{waughk, bowlingm\}@google.com}

\usepackage[utf8]{inputenc}
\usepackage[T1]{fontenc}
\usepackage{booktabs}
\usepackage{amsfonts}
\usepackage{nicefrac} 
\usepackage{microtype}

\usepackage{comment}
\usepackage{times}
\usepackage{xcolor}
\usepackage{soul}
\usepackage[small,skip=5pt]{caption}

\usepackage{algorithm,algpseudocode}
\usepackage{amsmath,amssymb,amsthm}
\usepackage{bm}
\usepackage{color}
\usepackage{mathtools}
\usepackage{multicol}
\usepackage{siunitx}
\usepackage[skip=0pt]{subcaption}
\usepackage{varwidth}

\newcommand{\vect}[1]{{\bm{\mathbf{#1}}}}
\newcommand{\mat}[1]{\mathbf{#1}}

\newcommand{\pub}{s}
\newcommand{\our}{c_i}
\newcommand{\opp}{c_{-i}}
\newcommand{\Priv}{C}
\newcommand{\Pub}{S}
\newcommand{\Pubterm}{S_Z}
\newcommand{\Pubtermu}{S_Z^u}

\DeclareMathOperator*{\argmin}{argmin}

\newcommand{\ourstrat}{\sigma_i}
\newcommand{\oppstrat}{\sigma_{-i}}
\newcommand{\oppstratp}{\sigma_{-i}'}
\newcommand{\feas}{\sigma_f}

\newtheorem{lem}{Lemma}

\newtheorem{thm}{Theorem}
\newtheorem{cor}{Corollary}[thm]


\usepackage{tikz}

\usetikzlibrary{positioning}

\newcommand{\tgScheme}{
\scalebox{0.75}{
\begin{tikzpicture}[circ/.style={circle,draw,minimum size=13,inner sep=0pt},node distance=0.5]
  \foreach \x in {0,...,5}
    \foreach \y in {0,...,2}
       \node [circ]  (\x\y) at (\x,\y) {};

  \foreach \x in {0,...,5}
    \foreach \y [count=\yi] in {0,...,1}  
      \draw (\x\y)--(\x\yi);
      
  \foreach \x [count=\xi] in {0,...,4}
    \foreach \y in {0,...,2}  
      \draw (\x\y)--(\xi\y);
  
  \foreach \x [count=\xi] in {0,...,4}
    \foreach \y [count=\yi] in {0,...,1}  
      \draw (\x\y)--(\xi\yi) (\x\yi)--(\xi\y);
  
  \node [circ]  (start) [left=of 01] {$s^0_e$};
  \foreach \y in {0,...,2}
    \draw [->] (start)--(0\y);
  \node [circ,fill=gray!40]  at (20) {$s^0_p$};
  
  \node[coordinate] (l) at (02.west)[yshift=12pt] {};
  \node[coordinate] (r) at (52.east)[yshift=12pt] {};
  \draw [<->,thick]  (l)-- node[above]{$2w$} (r);
  \node[coordinate] (b) at (50.south)[xshift=12pt] {};
  \node[coordinate] (t) at (52.north)[xshift=12pt] {};
  \draw [<->,thick]  (b)-- node[right]{$w$} (t);
\end{tikzpicture}}
}

\begin{document}
\maketitle

\begin{abstract}
Extensive-form games are a common model for multiagent interactions with imperfect information. In two-player zero-sum games, the typical solution concept is a Nash equilibrium over the unconstrained strategy set for each player. In many situations, however, we would like to constrain the set of possible strategies. For example, constraints are a natural way to model limited resources, risk mitigation, safety, consistency with past observations of behavior, or other secondary objectives for an agent. In small games, optimal strategies under linear constraints can be found by solving a linear program; however, state-of-the-art algorithms for solving large games cannot handle general constraints. In this work we introduce a generalized form of Counterfactual Regret Minimization that provably finds optimal strategies under any feasible set of convex constraints. We demonstrate the effectiveness of our algorithm for finding strategies that mitigate risk in security games, and for opponent modeling in poker games when given only partial observations of private information.
\end{abstract}

\section{Introduction}

Multiagent interactions are often modeled using \emph{extensive-form games} (EFGs), a powerful framework that incoporates sequential actions, hidden information, and stochastic events. Recent research has focused on computing approximately optimal strategies in large extensive-form games, resulting in a solution to heads-up limit Texas Hold'em, a game with approximately $10^{17}$ states \cite{Bowling15}, and in two independent super-human computer agents for the much larger heads-up no-limit Texas Hold'em \cite{Moravcik17,Brown18}.

When modeling an interaction with an EFG, for each outcome we must specify the agents' utility, a cardinal measure of the outcome's desirability. Utility is particularly difficult to specify. Take, for example, situations where an agent has multiple objectives to balance: a defender in a security game with the primary objective of protecting a target and a secondary objective of minimizing expected cost, or a robot operating in a dangerous environment with a primary task to complete and a secondary objective of minimizing damage to itself and others. How these objectives combine into a single value, the agent's utility, is ill-specified and error prone.

One approach for handling multiple objectives is to use a linear combination of per-objective utilities. This approach has been used in EFGs to ``tilt'' poker agents toward taking specific actions \cite{Johanson11}, and to mix between cost minimization and risk mitigation in sequential security games \cite{Lisy16}. However, objectives are typically measured on incommensurable scales. This leads to dubious combinations of weights often selected by trial-and-error.

A second approach is to constrain the agents' strategy spaces directly. For example, rather than minimizing the expected cost, we use a hard constraint that disqualifies high-cost strategies. Using such constraints has been extensively studied in single-agent perfect information settings \cite{Altman99} and partial information settings \cite{Isom08,Poupart15,Santana16}, as well as in (non-sequential) security games \cite{Brown14}.

Incorporating strategy constraints when solving EFGs presents a unique challenge. Nash equilibria can be found by solving a linear program (LP) derived using the sequence-form representation \cite{Koller96}. This LP is easily modified to incorporate linear strategy constraints; however, LPs do not scale to large games. Specialized algorithms for efficiently solving large games, such as an instantiation of Nesterov's \emph{excessive gap technique} (EGT) \cite{Hoda10} as well as \emph{counterfactual regret minimization} (CFR) \cite{Zinkevich07} and its variants \cite{Lanctot09,Tammelin15}, cannot integrate arbitrary strategy constraints directly. Currently, the only large-scale approach is restricted to constraints that consider only individual decisions \cite{Farina17}.

In this work we present the first scalable algorithm for solving EFGs with arbitrary convex strategy constraints. Our algorithm, Constrained CFR, provably converges towards a strategy profile that is minimax optimal under the given constraints. It does this while retaining the $\mathcal{O}(1/\sqrt{T})$ convergence rate of CFR and requiring additional memory proportional to the number of constraints. We demonstrate the empirical effectiveness of Constrained CFR by comparing its solution to that of an LP solver in a security game. We also present a novel constraint-based technique for opponent modeling with partial observations in a small poker game.

\section{Background}\label{sec:bg}

Formally, an extensive-form game \cite{Osborne94} is a game tree defined by:
\begin{itemize}
\item A set of \emph{players} $N$. This work focuses on games with two players, so $N=\{1,2\}$.
\item A set of \emph{histories} $H$, the tree's nodes rooted at $\emptyset$. The leafs, $Z \subseteq H$, are \emph{terminal histories}. For any history $h \in H$, we let $h' \sqsubset h$ denote a prefix $h'$ of $h$, and necessarily $h' \in H$.
\item For each $h \in H \setminus Z$, a set of \emph{actions} $A(h)$. For any $a \in A(h)$, $ha \in H$ is a child of $h$.
\item A \emph{player function} $P\colon H\setminus Z \to N \cup \{c\}$ defining the player to act at $h$. If $P(h)=c$ then \emph{chance} acts according to a known probability distribution $\sigma_c(h) \in \Delta_{|A(h)|}$, where $\Delta_{|A(h)|}$ is the probability simplex of dimension $|A(h)|$.
\item A set of \emph{utility functions} $u_i \colon Z \to \mathbb{R}$, for each player.  Outcome $z$ has utility $u_i(z)$ for player $i$. We assume the game is \emph{zero-sum}, i.e., $u_1(z) = -u_2(z)$. Let $u(z) = u_1(z)$.
\item For each player $i \in N$, a collection of \emph{information sets} $\mathcal{I}_i$. $\mathcal{I}_i$ partitions $H_i$, the histories where $i$ acts.  Two histories $h,h'$ in an information set $I\in\mathcal{I}_i$ are indistinguishable to $i$. Necessarily $A(h)=A(h')$, which we denote by $A(I)$. When a player acts they do not observe the history, only the information set it belongs to, which we denote as $I[h]$.
\end{itemize}
We assume a further requirement on the information sets $\mathcal{I}_i$ called \emph{perfect recall}.  It requires that players are never forced to forget information they once observed.  Mathematically this means that all indistinguishable histories share the same sequence of past information sets and actions for the actor. Although this may seem like a restrictive assumption, some perfect recall-like condition is needed to guarantee that an EFG can be solved in polynommial time, and all sequential games played by humans exhibit perfect recall.

\subsection{Strategies}

A \emph{behavioral strategy} for player $i$ maps each information set $I \in \mathcal{I}_i$ to a distribution over actions, $\sigma_i(I) \in \Delta_{|A(I)|}$. The probability assigned to $a \in A(I)$ is $\sigma_i(I,a)$. A \emph{strategy profile}, $\sigma = \{\sigma_1,\sigma_2\}$, specifies a strategy for each player. We label the strategy of the opponent of player $i$ as $\sigma_{-i}$. The sets of behavioral strategies and strategy profiles are $\Sigma_i$ and $\Sigma$ respectively.

A strategy profile uniquely defines a \emph{reach probability} for any history $h \in H$:
\begin{equation}
\pi^{\sigma}(h) \coloneqq \prod_{h'a \sqsubseteq h} \sigma_{P(h')}(I[h'],a)  \label{eq:reach}
\end{equation}
This product decomposes into contributions from each player and chance, $\pi^{\sigma_1}_1(h) \pi^{\sigma_2}_2(h) \pi_c(h)$. For a player $i\in N$, we denote the contributions from the opponent and chance as $\pi^{\sigma_{-i}}_{-i}(h)$ so that $\pi^{\sigma}(h) = \pi_i^{\sigma_i}(h) \pi_{-i}^{\sigma_{-i}}(h)$. By perfect recall we have $\pi^{\sigma_i}_i(h) = \pi^{\sigma_i}_i(h')$ for any $h,h'$ in same information set $I\in\mathcal{I}_i$. We thus also write this probability as $\pi^{\sigma_i}_i(I)$.

Given a strategy profile $\sigma=\{\sigma_1,\sigma_2\}$, the expected utility for player $i$ is given by
\begin{equation}
u_i(\sigma)=u_i(\sigma_1,\sigma_2) \coloneqq \sum_{z \in Z} \pi^{\sigma}(z) u_i(z).
\end{equation}

A strategy $\sigma_i$ is an \emph{$\varepsilon$-best response} to the opponent's strategy $\sigma_{-i}$ if $u_i(\sigma_i,\sigma_{-i}) + \varepsilon \ge u_i(\sigma_i', \sigma_{-i})$ for any alternative strategy $\sigma_i'\in\Sigma_i$.  A strategy profile is an \emph{$\varepsilon$-Nash equilibrium} when each $\sigma_i$ is a $\varepsilon$-best response to its opponent; such a profile exists for any $\varepsilon\ge 0$. The \emph{exploitability} of a strategy profile is the smallest $\varepsilon=\nicefrac{1}{2}(\varepsilon_1+\varepsilon_2)$ such that each $\sigma_i$ is an $\varepsilon_i$-best response. Due to the zero-sum property, the game's Nash equilibria are the saddle-points of the minimax problem
\begin{equation}
\max_{\sigma_1 \in \Sigma_1} \min_{\sigma_2 \in \Sigma_2} u(\sigma_1,\sigma_2) = \min_{\sigma_2 \in \Sigma_2} \max_{\sigma_1 \in \Sigma_1}  u(\sigma_1,\sigma_2). \label{eq:minmax}
\end{equation}

A zero-sum EFG can be represented in \emph{sequence form} \cite{Stengel96}. The sets of sequence-form strategies for players 1 and 2 are $\mathcal{X}$ and $\mathcal{Y}$ respectively. A sequence-form strategy $\vect{x} \in \mathcal{X}$ is a vector indexed by pairs $I \in \mathcal{I}_1$, $a \in A(I)$. The entry $\vect{x}_{(I,a)}$ is the probability of player 1 playing the sequence of actions that reaches $I$ and then playing action $a$.  A special entry, $\vect{x}_\emptyset = 1$, represents the empty sequence.  Any behavioral strategy $\sigma_1 \in \Sigma_1$ has a corresponding sequence-form strategy $\text{SEQ}(\sigma_1)$ where
\begin{equation*}
\text{SEQ}(\sigma_1)_{(I,a)} \coloneqq \pi^{\sigma_1}_1(I) \sigma_1(I,a). \qquad \forall I\in\mathcal{I}_i,a\in A(I)
\end{equation*}

Player $i$ has a unique sequence to reach any history $h\in H$ and, by perfect recall, any information set $I\in\mathcal{I}_i$.  Let $\vect{x}_h$ and $\vect{x}_I$ denote the corresponding entries in $\vect{x}$.  Thus, we are free to write the expected utility as $u(\vect{x}, \vect{y}) = \sum_{z\in Z} \pi_c(z) \vect{x}_{z} \vect{y}_{z} u(z)$.  This is bilinear, i.e., there exists a \emph{payoff matrix} $\mat{A}$ such that $u(\vect{x},\vect{y}) = \vect{x}^\intercal \mat{A}\vect{y}$.  A consequence of perfect recall and the laws of probability is for $I\in\mathcal{I}_1$ that $\vect{x}_I = \sum_{a\in A(I)}\vect{x}_{(I,a)}$ and that $\vect{x}\ge 0$.  These constraints are linear and completely describe the polytope of sequence-form strategies.  Using these together, (\ref{eq:minmax}) can be expressed as a bilinear saddle point problem over the polytopes $\mathcal{X}$ and $\mathcal{Y}$:
\begin{equation}
\max_{\vect{x} \in \mathcal{X}} \min_{\vect{y} \in \mathcal{Y}} \vect{x}^\intercal \mat{A}\vect{y} = \min_{\vect{y} \in \mathcal{Y}} \max_{\vect{x} \in \mathcal{X}} \vect{x}^\intercal \mat{A}\vect{y}
\label{eq:minmaxseq}
\end{equation}

For a convex function $f : \mathcal{X} \rightarrow \mathbb{R}$, let $\nabla f(\vect{x})$ be any element of the subdifferential $\partial f(\vect{x})$, and let $\nabla_{(I,a)} f(\vect{x})$ be the $(I,a)$ element of this subgradient.

\subsection{Counterfactual regret minimization}

\emph{Counterfactual regret minimization} \cite{Zinkevich07} is a large-scale equilibrium-finding algorithm that, in self-play, iteratively updates a strategy profile in a fashion that drives its \emph{counterfactual regret} to zero. This regret is defined in terms of \emph{counterfactual values}. The counterfactual value of reaching information set $I$ is the expected payoff under the counterfactual that the acting player attempts to reach it:
\begin{equation}
v(I,\sigma) = \sum_{h \in I} \pi_{-i}^{\sigma_{-i}}(h) \sum_{z \in Z} \pi^{\sigma}(h,z) u(z) \
\end{equation}
Here $i = P(h)$ for any $h \in I$, and $\pi^{\sigma}(h,z)$ is the probability of reaching $z$ from $h$ under $\sigma$.  Let $\sigma_{I \to a}$ be the profile that plays $a$ at $I$ and otherwise plays according to $\sigma$. For a series of profiles $\sigma^1,..,\sigma^T$, the \emph{average counterfactual regret} of action $a$ at $I$ is $R^T(I,a) = \frac{1}{T} \sum_{t=1}^T v(I,\sigma^t_{I \to a}) - v(I,\sigma^t)$.

To minimize counterfactual regret, CFR employs \emph{regret matching} \cite{Hart00}.  In particular, actions are chosen in proportion to positive regret, $\sigma^{t+1}(I, a) \propto (R^t(I,a))^+$ where $(x)^+=\max(x,0)$.  It follows that the average strategy profile $\overline{\sigma}^T$, defined by $\overline{\sigma}_i^T(I,a) \propto \sum_{t=1}^T \pi_i^{\sigma^t}(I) \sigma^t_i(I,a)$, is an $\mathcal{O}(1/\sqrt{T})$-Nash equilibrium \cite{Zinkevich07}. In sequence form, the average is given by $\vect{\overline{x}^T} = \frac{1}{T} \sum_{t=1}^T \vect{x^t}$.

\section{Solving games with strategy constraints}\label{sec:alg}

We begin by formally introducing the constrained optimization problem for extensive-form games. We specify convex constraints on the set of sequence-form strategies\footnote{Without loss of generality, we assume throughout this paper that the constrained player is player 1, i.e. the maximizing player.} $\mathcal{X}$ with a set of $k$ convex functions $f_i\colon \mathcal{X} \to \mathbb{R}$ where we require $f_i(\vect{x}) \leq 0$ for each $i=1,...,k$. We use constraints on the sequence form instead of on the behavioral strategies because reach probabilities and utilities are linear functions of a sequence-form strategy, but not of a behavioral strategy.

The optimization problem can be stated as:
\begin{align}
\max_{\vect{x} \in \mathcal{X}} \min_{\vect{y} \in \mathcal{Y}}~&\vect{x}^\intercal\mat{A}\vect{y}\label{eq:conopt}\\
&\text{subject to}\qquad f_i(\vect{x}) \leq 0\qquad \text{for}~i=1,...,k\nonumber
\end{align}

The first step toward solving this problem is to incorporate the constraints into the objective with Lagrange multipliers. If the problem is feasible (i.e., there exists a feasible $\vect{x}_f \in \mathcal{X}$ such that $f_i(\vect{x}_f) \leq 0$ for each $i$), then (\ref{eq:conopt}) is equivalent to:
\begin{equation}
\max_{\vect{x} \in \mathcal{X}} \min_{\substack{\vect{y} \in \mathcal{Y}\\\vect{\lambda} \geq \vect{0}}}  \vect{x}^\intercal\mat{A}\vect{y} - \sum_{i=1}^k \lambda_i f_i(\vect{x})\label{eq:lgopt}
\end{equation}

We will now present intuition as to how CFR can be modified to solve (\ref{eq:lgopt}), before presenting the algorithm and proving its convergence.

\subsection{Intuition}

CFR can be seen as doing a saddle point optimization on the objective in (\ref{eq:minmax}), using the gradients\footnote{For a more complete discussion of the connection between CFR and gradient ascent, see \cite{Waugh15}.} of $g(\vect{x},\vect{y}) = \vect{x}^\intercal\mat{A}\vect{y}$ given as
\begin{equation}
\nabla_{\vect{x}} g(\vect{x^t},\vect{y^t}) = \mat{A}\vect{y^t}\qquad \nabla_{\vect{y}} {-g(\vect{x^t},\vect{y^t})} = -(\vect{x^t})^\intercal\mat{A}.
\end{equation}

The intuition behind our modified algorithm is to perform the same updates, but with gradients of the modified utility function
\begin{equation}
h(\vect{x},\vect{y},\vect{\lambda}) = \vect{x}^\intercal\mat{A}\vect{y} - \sum_{i=1}^k \lambda_i f_i(\vect{x}).
\end{equation}
The (sub)gradients we use in the modified CFR update are then
\begin{align}
\begin{split}
&\nabla_{\vect{x}} h(\vect{x^t},\vect{y^t},\vect{\lambda^t}) = \mat{A}\vect{y^t}- \sum_{i=1}^k \lambda^t_i\nabla f_i(\vect{x^t})\\
&\nabla_{\vect{y}} {-h(\vect{x^t},\vect{y^t},\vect{\lambda^t})} = -(\vect{x^t})^\intercal\mat{A}.
\end{split}
\end{align}
Note that this leaves the update of the unconstrained player unchanged. In addition, we must update $\vect{\lambda^t}$ using the gradients $\nabla_{\vect{\lambda}} {-h(\vect{x^t},\vect{y^t},\vect{\lambda^t})} = \sum_{i=1}^k  f_i(\vect{x^t})\vect{e_i}$, which is the $k$-vector with $f_i(\vect{x^t})$ at index $i$. This can be done with any gradient method, e.g. simple gradient ascent with the update rule 
\begin{equation}
\lambda^{t+1}_i = \max(\lambda^t_i + \alpha^t f_i(\vect{x^t}),0)
\end{equation}
for some step size $\alpha^t \propto 1/\sqrt{t}$.

\subsection{Constrained counterfactual regret minimization}

We give the \emph{Constrained CFR} (CCFR) procedure in Algorithm~\ref{alg:ccfr}. The constrained player's strategy is updated with the function CCFR and the unconstrained player's strategy is updated with unmodified CFR. In this instantiation $\vect{\lambda^t}$ is updated with gradient ascent, though any regret minimizing update can be used. We clamp each $\lambda_i^t$ to the interval $[0,\beta]$ for reasons discussed in the following section. Together, these updates form a full iteration of CCFR.

\begin{algorithm}[htb]
\caption{Constrained CFR}
\label{alg:ccfr}
\begin{algorithmic}[1]
\Statex
\Function{CCFR}{$\sigma_i^t$,$\sigma_{-i}^t$,$\vect{\lambda}^t$}
    \For{$I \in \mathcal{I}_i$} \Comment{in reverse topological order}\label{line:loop}
        \For{$a \in A(I)$}
            \State  \begin{varwidth}[t]{\linewidth}
            	$v^t(I,a) \gets \sum_{z \in Z^1[Ia]} \pi^{\sigma_{-i}^t}_{-i}(z)u(z)$\par
            	\hskip\algorithmicindent\hspace{3em} $+ \sum_{I'\in succ(I,a)}\tilde{v}^t(I')$
            	\end{varwidth}
            	\label{line:acfv}\newline
            \hspace*{-\fboxsep}\colorbox{yellow!50}{\parbox{\linewidth-3pt}{%
            \State \begin{varwidth}[t]{\linewidth}
                $\tilde{v}^t(I,a) \gets v^t(I,a)$\par
                \hskip\algorithmicindent\hspace{3em} $- \sum_{i=1}^k \lambda_i^t\nabla_{(I,a)} f_i(\text{SEQ}(\sigma_i^t))$
                \end{varwidth}
                \label{line:tilt}
            }}
        \EndFor
        \State $\tilde{v}^t(I) \gets \sum_{a \in A(I)} \sigma^t_i(I,a)\tilde{v}^t(I,a)$\label{line:cfv}
	\For{$a \in A(I)$}
	   \State $\tilde{r}^t(I,a) \gets \tilde{v}^t(I,a) - \tilde{v}^t(I)$ \label{line:reg1}
	   \State $\tilde{R}^t(I,a) \gets \tilde{R}^{t-1}(I,a) + \tilde{r}^t(I,a)$ \label{line:reg2}
	\EndFor
	\For{$a \in A(I)$}
	   \State $\sigma_i^{t+1}(I,a) \gets \frac{(\tilde{R}^t(I,a))^+}{\sum_{b \in A(I)} (\tilde{R}^t(I,b))^+}$\label{line:rm}
	\EndFor
    \EndFor
    \State \Return $\sigma_i^{t+1}$
\EndFunction
\Statex
\For{$t=1,...,T$}
    \State $\sigma_2^t \gets \Call{CFR}{\sigma_1^{t-1},\sigma_2^{t-1}}$
    \For{$i=1,...,k$}
        \State $\lambda_i^t \gets \lambda_i^{t-1} + \alpha_t f_i(\Psi(\sigma_1^{t-1}))$
        \State $\lambda_i^t \gets \Call{Clamp}{\lambda_i^t,[0,\mathcal{B}]}$
    \EndFor
    \State $\sigma_1^t \gets \Call{CCFR}{\sigma_1^{t-1},\sigma_2^t,\vec{\lambda}^t}$
    \State $\overline{\sigma}_2^t \gets \frac{t-1}{t} \overline{\sigma}_2^{t-1} + \frac{1}{t}\sigma_2^t$
    \State $\overline{\sigma}_1^t \gets \frac{t-1}{t} \overline{\sigma}_1^{t-1} + \frac{1}{t}\sigma_1^t$
\EndFor
\end{algorithmic}
\end{algorithm}

The CCFR update for the constrained player is the same as the CFR update, with the crucial difference of line~\ref{line:tilt}, which incorporates the second part of the gradient $\nabla_{\vect{\vect{x}}}{h}$ into the counterfactual value $v^t(I,a)$. The loop beginning on line~\ref{line:loop} goes through the constrained player's information sets, walking the tree bottom-up from the leafs. The counterfactual value $v^t(I,a)$ is set on line~\ref{line:acfv} using the values of terminal states $Z^1[Ia]$ which directly follow from action $a$ at $I$ (this corresponds to the $\mat{A}\vect{y^t}$ term of the gradient), as well as the already computed values of successor information sets $succ(I,a)$. Line~\ref{line:cfv} computes the value of the current information set using the current strategy. Lines~\ref{line:reg1} and \ref{line:reg2} update the stored regrets for each action. Line~\ref{line:rm} updates the current strategy with regret matching.

\subsection{Theoretical analysis}

In order to ensure that the utilities passed to the regret matching update are bounded, we will require $\vect{\lambda^t}$ to be bounded from above; in particular, we will choose $\vect{\lambda^t} \in [0,\beta]^k$. We can then evaluate the chosen sequence $\vect{\lambda^1},...,\vect{\lambda^T}$ using its regret in comparison to the optimal $\vect{\lambda^*} \in [0,\beta]^k$:
\begin{align}
\begin{split}
R^T_{\lambda}(\beta) \coloneqq \max_{\lambda^* \in [0,\beta]^k} \frac{1}{T} \sum_{t=1}^T \sum_{i=1}^k \bigl[&\lambda_i^* f_i(\text{SEQ}(\sigma_1^t))\\
 &- \lambda_i^t f_i(\text{SEQ}(\sigma_1^t))\bigr].
 \end{split}
\end{align}
We can guarantee $R^T_{\lambda}(\beta) = \mathcal{O}(1/\sqrt{T})$, e.g. by choosing $\vect{\lambda^t}$ with projected gradient ascent~\cite{Zinkevich03}.

We now present the theorems which show that CCFR can be used to approximately solve (\ref{eq:conopt}). In the following thereoms we assume that $T \in \mathbb{N}$, we have some convex, continuous constraint functions $f_1,...,f_k$, and we use some regret-minimizing method to select the vectors $\vect{\lambda}^1,...,\vect{\lambda}^T$ each in $[0,\beta]^k$ for some $\beta \geq 0$.

First, we show that the exploitability of the average strategies approaches the optimal value:
\begin{thm}
If CCFR is used to select the sequence of strategies $\sigma_1^1,...,\sigma_1^T$ and CFR is used to select the sequence of strategies $\sigma_2^1,...,\sigma_2^T$, then the following holds:
\begin{align}
&\max_{\substack{\sigma_1^* \in \Sigma_1\\\text{s.t. }f_i(\sigma_1^*) \leq 0~\forall i}} u(\sigma_1^*,\overline{\sigma}_2^T) - \min_{\sigma_2^* \in \Sigma_2} u(\overline{\sigma}_1^T,\sigma_2^*)\nonumber\\
&\hspace{5em}\leq \frac{4\left(\Delta_u + k\beta F\right)M \sqrt{|A|}}{\sqrt{T}} + 2R_\lambda^T(\beta)
\label{eq:expbound}
\end{align}
where $\Delta_u = \max_z u(z) - \min_z u(z)$ is the range of possible utilities, $|A|$ is the maximum number of actions at any information set, $k$ is the number of constraints, $F=\max_{\vect{x},i}||\nabla f_i(\vect{x})||_1$ is a bound on the subgradients\footnote{Such a bound must exist as the strategy sets are compact and the constraint functions are continuous.}, and $M$ is a game-specific constant.
\label{thm:ccfrex}
\end{thm}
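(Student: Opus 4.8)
The plan is to view CCFR as coordinatewise regret minimization on the Lagrangian saddle point (\ref{eq:lgopt}): the constrained player $\vect{x}$ is updated by CCFR, the opponent $\vect{y}$ by ordinary CFR, and the multipliers $\vect{\lambda}$ by projected gradient ascent. I would bound the three average regrets separately and then combine them by the usual folk-theorem argument that turns CFR regret into an exploitability bound, but carried out on $h(\vect{x},\vect{y},\vect{\lambda}) = \vect{x}^\intercal\mat{A}\vect{y} - \sum_i \lambda_i f_i(\vect{x})$ rather than on $g(\vect{x},\vect{y}) = \vect{x}^\intercal\mat{A}\vect{y}$.

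First I would bound the constrained player's regret. The tilted counterfactual values computed on line~\ref{line:tilt} are the ordinary counterfactual values shifted by $-\sum_i\lambda_i^t\nabla_{(I,a)} f_i(\text{SEQ}(\sigma_1^t))$, i.e. exactly the $(I,a)$ components of the subgradient $\nabla_{\vect x} h(\vect{x}^t,\vect{y}^t,\vect{\lambda}^t)$. Since $\vect{\lambda}^t$ is clamped to $[0,\beta]^k$ and $\|\nabla f_i\|_1\le F$, each such value lies in a range of width $O(\Delta_u + k\beta F)$, so regret matching still bounds the immediate counterfactual regret at each information set by $O((\Delta_u+k\beta F)\sqrt{|A|}/\sqrt{T})$. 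Summing these over the information sets via the standard CFR regret decomposition (their number absorbed into the game-specific constant $M$) bounds the linearized regret $\frac1T\sum_t\langle\nabla_{\vect x}h(\vect{x}^t,\vect{y}^t,\vect{\lambda}^t),\vect{x}^*-\vect{x}^t\rangle$. Because each $f_i$ is convex, $h(\cdot,\vect{y}^t,\vect{\lambda}^t)$ is concave in $\vect{x}$ whenever $\vect{\lambda}^t\ge\vect{0}$, so this linearized regret upper-bounds the true regret $\frac1T\sum_t[h(\vect{x}^*,\vect{y}^t,\vect{\lambda}^t)-h(\vect{x}^t,\vect{y}^t,\vect{\lambda}^t)]$; call this bound $\overline R_1$. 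Player~2's update is unmodified CFR on values of range $\Delta_u$, giving the analogous $\overline R_2$.

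Next I would combine the pieces. For any comparator $\vect{x}^*\in\mathcal{X}$ that is feasible we have $\sum_i\lambda_i^t f_i(\vect{x}^*)\le 0$, hence $g(\vect{x}^*,\vect{y}^t)\le h(\vect{x}^*,\vect{y}^t,\vect{\lambda}^t)$ for every $t$; averaging and applying $\overline R_1$ gives $g(\vect{x}^*,\overline{\vect{y}}^T)\le \frac1T\sum_t h(\vect{x}^t,\vect{y}^t,\vect{\lambda}^t)+\overline R_1$. Player~2's regret gives $\min_{\vect{y}^*} g(\overline{\vect{x}}^T,\vect{y}^*)\ge\frac1T\sum_t g(\vect{x}^t,\vect{y}^t)-\overline R_2$. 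Subtracting, the bilinear terms cancel and leave $-\frac1T\sum_t\sum_i\lambda_i^t f_i(\vect{x}^t)$, which is exactly the multiplier player's loss and is controlled by the comparator $\vect{\lambda}^*=\vect{0}$ in the definition of $R_\lambda^T(\beta)$. Collecting constants — roughly a factor two from $\overline R_1+\overline R_2$ and a factor two from the regret-matching constant — yields (\ref{eq:expbound}); I would track the staggered update order of $\vect{\lambda}^t$ and the strategies carefully, as reconciling the multipliers used inside CCFR with those scored by $R_\lambda^T$ is what produces the precise $2R_\lambda^T(\beta)$ term.

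The main obstacle I expect is the interface between the nonlinear constraints and the essentially linear CFR machinery: I must verify that clamping $\vect{\lambda}^t$ to $[0,\beta]^k$ keeps the tilted counterfactual values bounded, so that regret matching retains its $\mathcal{O}(1/\sqrt{T})$ guarantee, and that minimizing the subgradient-linearized regret suffices to control regret on the genuinely convex Lagrangian, which is where concavity of $h$ in $\vect{x}$ is indispensable. A secondary subtlety is that the average iterate $\overline{\vect{x}}^T$ need not itself be feasible, so the left-hand side of (\ref{eq:expbound}) measures a constrained-best-response gap rather than an ordinary exploitability; the argument above never assumes feasibility of $\overline{\vect{x}}^T$, which is precisely why the bound remains valid.
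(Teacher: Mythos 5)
Your proposal is correct, and its technical core coincides with the paper's: both view CCFR as regret minimization against the Lagrangian $\tilde{u}$, bound the constrained player's Lagrangian regret by running regret matching on subgradient-tilted counterfactual values of range $\Delta_u + 2k\beta F$, pass from per-information-set regrets to full-sequence regret, and use convexity of the $f_i$ to upgrade the linearized (subgradient) regret to true Lagrangian regret --- this is exactly the paper's appendix Lemmas~1--3 and its Theorem~4. One caution on that part: the step you call ``the standard CFR regret decomposition'' is not off-the-shelf, because the tilt $c^t(I,a)$ sits on internal sequences and propagates weighted only by player~1's own reach probabilities (not the opponent's), so the terminal-utility argument of the original CFR theorem does not apply verbatim; re-deriving the decomposition for tilted values is precisely what the paper's Lemmas~1--2 do, and you correctly flag this interface as the point needing verification. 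Where you genuinely diverge is the final combination, and your version is cleaner: the paper proves two one-sided inequalities, each pivoting through $u(\overline{\sigma}_1^T,\overline{\sigma}_2^T)$ and each costing $\tilde{R}_1^T + R_2^T + R_\lambda^T(\beta)$, then adds them --- that addition is where the overall factor of $2$ (hence the $4$ and the $2R_\lambda^T(\beta)$ in (\ref{eq:expbound})) comes from. Your direct subtraction, with the leftover term $-\frac{1}{T}\sum_{t}\sum_{i}\lambda_i^t f_i(\Psi(\sigma_1^t))$ absorbed by the comparator $\vect{\lambda}^*=\vect{0}$, pays each regret only once and yields $2\left(\Delta_u+k\beta F\right)M\sqrt{|A|}/\sqrt{T} + R_\lambda^T(\beta)$. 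Two bookkeeping consequences: first, your guess that the staggered $\vect{\lambda}$ updates produce the $2R_\lambda^T(\beta)$ term is wrong --- it is purely an artifact of the paper's add-two-pivots accounting; second, since $R_\lambda^T(\beta)$ can be negative (the sequence $\vect{\lambda}^t$ may beat every fixed comparator), your inequality and (\ref{eq:expbound}) are formally incomparable, so to obtain the theorem verbatim rather than your (typically tighter) variant, you should finish with the paper's pivoted two-inequality accounting rather than claim your bound implies the stated one.
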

All proofs are given in the appendix. Theorem~\ref{thm:ccfrex} guarantees that the constrained exploitability of the final CCFR strategy profile converges to the minimum exploitability possible over the set of feasible profiles, at a rate of $\mathcal{O}(1/\sqrt{T})$ (assuming a suitable regret minimizer is used to select $\vect{\lambda}^t$).

In order to establish that CCFR approximately solves optimization (\ref{eq:conopt}), we must also show that the CCFR strategies converge to being feasible.  In the case of arbitrary $\beta \geq 0$:
\begin{thm}
If CCFR is used to select the sequence of strategies $\sigma_1^1,...,\sigma_1^T$ and CFR is used to select the sequence of strategies $\sigma_2^1,...,\sigma_2^T$, then the following holds:
\begin{align}
&f_i(\text{SEQ}(\overline{\sigma}_1^T)) \leq \frac{R_\lambda^T(\beta)}{\beta} + \frac{\left(\Delta_u + 2k\beta F\right)M \sqrt{|A|}}{\beta\sqrt{T}} + \frac{\Delta_u}{\beta}\nonumber\\
&\hspace{15em}\forall i \in \{1,...,k\}\label{eq:conbound}
\end{align}
\label{thm:conbound}
\end{thm}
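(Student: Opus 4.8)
The plan is to exploit the fact that the $\vect{\lambda}$ player is itself a regret minimizer: it would drive an arbitrarily large penalty against any persistently infeasible iterate, but is prevented from doing so only by the clamping bound $\beta$. That ceiling on the achievable penalty is precisely what translates into a bound on the average constraint violation. Throughout write $\vect{x^t} = \text{SEQ}(\sigma_1^t)$, $\vect{y^t}=\text{SEQ}(\sigma_2^t)$, and let $R_1^T \coloneqq \max_{\vect{x^*}\in\mathcal{X}} \frac{1}{T}\sum_{t=1}^T [h(\vect{x^*},\vect{y^t},\vect{\lambda^t}) - h(\vect{x^t},\vect{y^t},\vect{\lambda^t})]$ denote the constrained player's average regret against the tilted objective, which is exactly what CCFR minimizes. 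The first step is to reduce to the iterate average: since each $f_i$ is convex and $\text{SEQ}(\overline{\sigma}_1^T) = \frac{1}{T}\sum_t \vect{x^t}$, Jensen's inequality gives $f_i(\text{SEQ}(\overline{\sigma}_1^T)) \le \frac{1}{T}\sum_{t=1}^T f_i(\vect{x^t})$, so it suffices to bound this time-average for each $i$.

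Second, I would extract the average violation from $R_\lambda^T(\beta)$. Taking the extremal comparator $\vect{\lambda^*} = \beta\vect{e_i}$ in the definition of $R_\lambda^T(\beta)$ yields
\begin{equation*}
\beta \cdot \frac{1}{T}\sum_{t=1}^T f_i(\vect{x^t}) \le R_\lambda^T(\beta) + \frac{1}{T}\sum_{t=1}^T \sum_{j=1}^k \lambda_j^t f_j(\vect{x^t}),
\end{equation*}
so after dividing by $\beta$ the whole problem reduces to controlling the average realized penalty $P \coloneqq \frac{1}{T}\sum_t \sum_j \lambda_j^t f_j(\vect{x^t})$.

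Third, I would bound $P$ using the constrained player's regret together with feasibility. Rewriting $P = \frac{1}{T}\sum_t (\vect{x^t})^\intercal\mat{A}\vect{y^t} - \frac{1}{T}\sum_t h(\vect{x^t},\vect{y^t},\vect{\lambda^t})$ and applying the regret bound against the feasible comparator $\vect{x}_f$ (which exists by the feasibility assumption) gives $\frac{1}{T}\sum_t h(\vect{x^t},\vect{y^t},\vect{\lambda^t}) \ge \frac{1}{T}\sum_t h(\vect{x}_f,\vect{y^t},\vect{\lambda^t}) - R_1^T$. Since $f_j(\vect{x}_f)\le 0$ and $\lambda_j^t \ge 0$, the penalty term vanishes favorably, so $h(\vect{x}_f,\vect{y^t},\vect{\lambda^t}) \ge (\vect{x}_f)^\intercal\mat{A}\vect{y^t}$; combining this with the fact that $\frac{1}{T}\sum_t (\vect{x^t}-\vect{x}_f)^\intercal\mat{A}\vect{y^t} \le \Delta_u$ (a difference of two expected utilities, each in $[\min_z u(z),\max_z u(z)]$) yields $P \le \Delta_u + R_1^T$. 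Substituting back through the previous two steps gives $f_i(\text{SEQ}(\overline{\sigma}_1^T)) \le \frac{R_\lambda^T(\beta)}{\beta} + \frac{\Delta_u}{\beta} + \frac{R_1^T}{\beta}$.

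The main obstacle is the remaining ingredient: bounding $R_1^T$ by $\frac{(\Delta_u + 2k\beta F)M\sqrt{|A|}}{\sqrt{T}}$. This is where the modification on line~\ref{line:tilt} must be fully accounted for, since regret matching is run on the tilted values $\tilde{v}^t$ rather than on the raw counterfactual values. I would show that the per-information-set range of $\tilde{v}^t(I,\cdot)$ is at most $\Delta_u + 2k\beta F$ — the base values contribute $\Delta_u$, while the penalty-gradient term $\sum_j \lambda_j^t \nabla_{(I,a)} f_j$ contributes at most $2k\beta F$ via $\lambda_j^t \le \beta$ and $\|\nabla f_j\|_1 \le F$ (the factor of two coming from taking a range rather than a magnitude) — and then invoke the standard CFR regret bound, with the game-specific constant $M$ absorbing the tree structure. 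This is the same effective-range estimate already required for Theorem~\ref{thm:ccfrex}, so I would reuse it rather than re-derive it.
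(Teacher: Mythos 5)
Your proposal is correct and is essentially the paper's own argument: the paper likewise bounds the average realized penalty $\frac{1}{T}\sum_{t}\sum_{j}\lambda_j^t f_j(\Psi(\sigma_1^t))$ by $\Delta_u$ plus the constrained player's tilted regret (using a feasible comparator whose penalty term drops out favorably), extracts the violation of the average strategy from $R_\lambda^T(\beta)$ via an extremal comparator together with Jensen's inequality, and invokes the separately proven bound $(\Delta_u + 2k\beta F)M\sqrt{|A|}/\sqrt{T}$ on the tilted regret (the paper's appendix regret theorem, which is indeed shared with the proof of Theorem~\ref{thm:ccfrex}, so your reuse is legitimate). The only differences are cosmetic: the paper derives Theorem~\ref{thm:conbound} as a corollary of a more general result that also covers infeasible constraints (choosing a comparator that places $\beta$ on every violated constraint, hence bounding $\sum_i (f_i(\Psi(\overline{\sigma}_1^T)))^+$ up to the minimal infeasibility $f^*$, rather than bounding each $f_i$ individually via $\beta\vect{e_i}$ as you do), and it carries out the algebra with behavioral strategies and $\tilde{u}$ rather than in sequence form.
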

This theorem guarantees that the CCFR strategy converges to the feasible set at a rate of $\mathcal{O}(1/\sqrt{T})$, up to an approximation error of $\Delta_u/\beta$ induced by the bounding of $\vect{\lambda^t}$.

We can eliminate the approximation error when $\beta$ is chosen large enough for some optimal $\vect{\lambda}^*$ to lie within the bounded set $[0,\beta]^k$. In order to establish the existence of such a $\vect{\lambda}^*$, we must assume a constraint qualification such as Slater's condition, which requires the existance of a feasible $\vect{x}$ which strictly satisfies any nonlinear constraints ($f_i(\vect{x}) \leq 0$ for all $i$ and $f_j(\vect{x}) < 0$ for all nonlinear $f_j$). Then there exists a finite $\vect{\lambda}^*$ which is a solution to optimization (\ref{eq:lgopt}), which we can use to give the bound:
\begin{thm}
Assume that $f_1,...,f_k$ satisfy a constraint qualification such as Slater's condition, and define $\vect{\lambda}^*$ to a finite solution for $\vect{\lambda}$ in the resulting optimization (\ref{eq:lgopt}). Then if $\beta$ is chosen such that $\beta > \lambda_i^*$ for all $i$, and CCFR and CFR are used to respectively select the strategy sequences $\sigma_1^1,...,\sigma_1^T$ and $\sigma_2^1,...,\sigma_2^T$, the following holds:
\begin{align}
&f_i(\text{SEQ}(\overline{\sigma}_1^T)) \leq \frac{R_\lambda^T(\beta)}{\beta-\lambda^*_i} + \frac{2\left(\Delta_u + k\beta F\right)M \sqrt{|A|}}{(\beta-\lambda^*_i)\sqrt{T}}\nonumber\\
&\hspace{15em}\forall i \in \{1,...,k\}\label{eq:conboundopt}
\end{align}
\label{thm:conboundopt}
\end{thm}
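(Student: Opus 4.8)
The plan is to reprove Theorem~\ref{thm:conbound} almost verbatim, but to replace the two crude comparators used there (an arbitrary feasible point and the all-at-$\beta$ multiplier) with comparators built from the optimal saddle point, so that the stray $\Delta_u/\beta$ term cancels. Slater's condition, together with the fact that $h(\vect{x},\vect{y},\vect{\lambda})=\vect{x}^\intercal\mat{A}\vect{y}-\sum_j\lambda_j f_j(\vect{x})$ is concave in $\vect{x}$ (each $f_j$ convex, $\lambda_j\ge 0$) and convex in $(\vect{y},\vect{\lambda})$, guarantees a finite saddle point $(\vect{x}^*,\vect{y}^*,\vect{\lambda}^*)$ of (\ref{eq:lgopt}) with value $V^*$. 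Writing $V(\vect{x}):=\min_{\vect{y}}\vect{x}^\intercal\mat{A}\vect{y}$, I will use three of its properties: $\vect{x}^*$ is feasible, so $f_j(\vect{x}^*)\le 0$; $V(\vect{x}^*)=V^*$; and, by complementary slackness, $h(\vect{x}^*,\vect{y}^*,\vect{\lambda}^*)=V^*$ with $\vect{x}^*$ maximizing $h(\cdot,\vect{y}^*,\vect{\lambda}^*)$. Throughout write $\vect{x}^t=\text{SEQ}(\sigma_1^t)$, $\vect{y}^t=\text{SEQ}(\sigma_2^t)$, and $\overline{\vect{x}}=\frac1T\sum_t\vect{x}^t=\text{SEQ}(\overline{\sigma}_1^T)$.

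First I would bound the $\vect{\lambda}$-regret against the comparator $\hat{\vect{\lambda}}$ defined by $\hat\lambda_i=\beta$ and $\hat\lambda_j=\lambda_j^*$ for $j\ne i$ (this lies in $[0,\beta]^k$ since $\beta>\lambda_j^*$). The definition of $R_\lambda^T(\beta)$ then yields
\begin{equation*}
(\beta-\lambda_i^*)\tfrac1T\textstyle\sum_{t}f_i(\vect{x}^t)\le R_\lambda^T(\beta)+\tfrac1T\sum_t\sum_j(\lambda_j^t-\lambda_j^*)f_j(\vect{x}^t),
\end{equation*}
which replaces the $\beta$ in the denominator of Theorem~\ref{thm:conbound} by $\beta-\lambda_i^*$ at the cost of the cross term on the right. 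The whole difficulty is to show this cross term is only $\mathcal{O}(1/\sqrt{T})$.

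Next I would control it with the two CFR-type regrets. The CCFR regret $R_1^T$ (which, as in the proof of Theorem~\ref{thm:ccfrex}, bounds the external regret of $\{\vect{x}^t\}$ on the linearized objectives $\ell^t(\vect{x})=\vect{x}^\intercal\mat{A}\vect{y}^t-\sum_j\lambda_j^t\nabla f_j(\vect{x}^t)^\intercal\vect{x}$) with comparator $\vect{x}^*$, convexity of each $f_j$ (giving $\nabla f_j(\vect{x}^t)^\intercal(\vect{x}^*-\vect{x}^t)\le f_j(\vect{x}^*)-f_j(\vect{x}^t)$), and feasibility $f_j(\vect{x}^*)\le 0$ together yield $\frac1T\sum_t\sum_j\lambda_j^t f_j(\vect{x}^t)\le R_1^T+\frac1T\sum_t(\vect{x}^t-\vect{x}^*)^\intercal\mat{A}\vect{y}^t$. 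Subtracting $\frac1T\sum_t\sum_j\lambda_j^* f_j(\vect{x}^t)$ from both sides, the cross term becomes at most $R_1^T+\frac1T\sum_t\bigl[h(\vect{x}^t,\vect{y}^t,\vect{\lambda}^*)-(\vect{x}^*)^\intercal\mat{A}\vect{y}^t\bigr]$. Now using $(\vect{x}^*)^\intercal\mat{A}\vect{y}^t\ge V(\vect{x}^*)=V^*$, player 2's CFR regret to get $\frac1T\sum_t(\vect{x}^t)^\intercal\mat{A}\vect{y}^t\le V(\overline{\vect{x}})+R_2^T$, Jensen on the convex $f_j$ to move the $-\lambda_j^* f_j$ average onto $\overline{\vect{x}}$, and finally $V(\overline{\vect{x}})\le\overline{\vect{x}}^\intercal\mat{A}\vect{y}^*$ with the saddle inequality $h(\overline{\vect{x}},\vect{y}^*,\vect{\lambda}^*)\le V^*$, I obtain $\frac1T\sum_t h(\vect{x}^t,\vect{y}^t,\vect{\lambda}^*)\le V^*+R_2^T$. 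The two copies of $V^*$ then cancel, leaving the cross term $\le R_1^T+R_2^T$ with no residual constant; this cancellation is precisely what removes the $\Delta_u/\beta$ term of Theorem~\ref{thm:conbound}.

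Combining the two displays gives $(\beta-\lambda_i^*)\frac1T\sum_t f_i(\vect{x}^t)\le R_\lambda^T(\beta)+R_1^T+R_2^T$, and convexity of $f_i$ (so that $f_i(\text{SEQ}(\overline{\sigma}_1^T))=f_i(\overline{\vect{x}})\le\frac1T\sum_t f_i(\vect{x}^t)$) gives the claimed bound once $R_1^T$ and $R_2^T$ are each $\mathcal{O}(1/\sqrt{T})$ exactly as in the proof of Theorem~\ref{thm:ccfrex}: the modified counterfactual values have range $\Delta_u+k\beta F$ and the ordinary ones range $\Delta_u$, so $R_1^T+R_2^T\le 2(\Delta_u+k\beta F)M\sqrt{|A|}/\sqrt{T}$. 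I expect the main obstacle to be the collapse in the previous paragraph: forcing the two value terms into a single $V^*$ that cancels requires a genuine finite saddle point with $h(\vect{x}^*,\vect{y}^*,\vect{\lambda}^*)=V^*$ and $V(\vect{x}^*)=V^*$, which is exactly where Slater's condition and the concave--convex structure of $h$ are indispensable; without them one recovers only the weaker Theorem~\ref{thm:conbound}.
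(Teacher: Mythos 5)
Your proof is correct, and it establishes the theorem by a genuinely different decomposition than the paper's. The paper sandwiches a single quantity, the exploitability gap of $\overline{\sigma}_1^T$ in the modified game with $\vect{\lambda}$ restricted to $[0,\beta]^k$: a duality argument (explicitly evaluating the inner $\min_{\vect{\lambda}\in[0,\beta]^k}$ to produce positive parts, splitting $\beta = \lambda_i^* + (\beta-\lambda_i^*)$, and swapping min and max via strong duality) lower-bounds this gap by $\sum_i(\beta-\lambda_i^*)\bigl(f_i(\Psi(\overline{\sigma}_1^T))\bigr)^+$, while a Jensen-plus-regret chain upper-bounds it by $\tilde{R}_1^T+R_2^T+R_\lambda^T(\beta)$; comparing the two bounds gives the result. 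You never form that gap: instead you feed a tailored comparator ($\beta$ in coordinate $i$, $\lambda_j^*$ elsewhere) into the $\vect{\lambda}$-regret, feed the primal saddle-point strategy $\vect{x}^*$ into the CCFR regret, and cancel the two game-value terms using complementary slackness (giving $V(\vect{x}^*)=V^*$) together with the saddle inequality $h(\overline{\vect{x}},\vect{y}^*,\vect{\lambda}^*)\le V^*$. Both arguments consume exactly the same ingredients (the three regret bounds, Jensen's inequality, Slater/strong duality), so neither is more general; the paper's route yields the slightly stronger simultaneous bound on the total violation $\sum_i(\beta-\lambda_i^*)(f_i)^+$ evaluated at the average strategy, whereas yours is per-coordinate, but yours makes the cancellation mechanism explicit and avoids the min--max swap over the box $[0,\beta]^k$, arguably making it more elementary. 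One small slip that does not affect the result: the CCFR regret bound (Theorem~\ref{thmreg}) has range constant $\Delta_u+2k\beta F$, not $\Delta_u+k\beta F$ as you recalled; your final arithmetic still holds because $(\Delta_u+2k\beta F)+\Delta_u = 2(\Delta_u+k\beta F)$, which is exactly how the paper obtains the constant in the theorem statement.
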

In this case, the CCFR strategy converges fully to the feasible set, at a rate of $\mathcal{O}(1/\sqrt{T})$, given a suitable choice of regret minimizer for $\vect{\lambda}^t$. We provide an explicit example of such a minimizer in the following corollary:
\begin{cor}
If the conditions of Theorem~\ref{thm:conboundopt} hold and, in addition, the sequence $\vect{\lambda}^1,...,\vect{\lambda}^T$ is chosen using projected gradient descent with constant learning rate $\alpha^t = \beta/(G\sqrt{T})$ where $G = \max_{i,\vect{x}} f_i(\vect{x})$, then the following hold:
\begin{align}
&f_i(\text{SEQ}(\overline{\sigma}_1^T)) \leq \frac{\beta G + 2\left(\Delta_u + k\beta F\right)M \sqrt{|A|}}{(\beta-\lambda^*_i)\sqrt{T}}\nonumber\\
&\hspace{15em}\forall i \in \{1,...,k\}
\end{align}
\begin{align}
&\max_{\substack{\sigma_1^* \in \Sigma_1\\\text{s.t. }f_i(\sigma_1^*) \leq 0~\forall i}} u(\sigma_1^*,\overline{\sigma}_2^T) - \min_{\sigma_2^* \in \Sigma_2} u(\overline{\sigma}_1^T,\sigma_2^*)\nonumber\\
&\hspace{5em}\leq \frac{4\left(\Delta_u + k\beta F\right)M \sqrt{|A|} + 2\beta G}{\sqrt{T}}
\end{align}
\end{cor}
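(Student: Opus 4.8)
The plan is to treat this corollary as a direct specialization of Theorems~\ref{thm:ccfrex} and~\ref{thm:conboundopt}: both of those results already express the constrained exploitability and the per-constraint feasibility violation in terms of the \emph{abstract} average regret $R_\lambda^T(\beta)$ of the multiplier sequence, so the only genuinely new work is to bound $R_\lambda^T(\beta)$ explicitly for the prescribed projected-gradient-descent rule and then substitute.

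First I would observe that the selection of $\vect{\lambda}^1,\dots,\vect{\lambda}^T$ is an instance of online linear optimization over the box $[0,\beta]^k$: at round $t$ the reward is the linear map $\vect{\lambda}\mapsto\sum_{i=1}^k\lambda_i f_i(\text{SEQ}(\sigma_1^t))$, whose gradient is the fixed vector with entries $f_i(\text{SEQ}(\sigma_1^t))$, and $R_\lambda^T(\beta)$ is by definition exactly the time-averaged external regret of this problem against the best fixed $\vect{\lambda}^*\in[0,\beta]^k$. Because both the box constraint and the objective are separable across coordinates, the maximization over $\vect{\lambda}^*$ splits into $k$ independent one-dimensional problems, so it suffices to control the regret of projected gradient descent on a single interval $[0,\beta]$ with per-round gradient magnitude bounded by $G=\max_{i,\vect{x}}f_i(\vect{x})$.

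Next I would invoke the standard guarantee for projected gradient descent~\cite{Zinkevich03}: with the constant step size $\alpha^t=\beta/(G\sqrt{T})$ on a domain of diameter $\beta$ with gradient bound $G$, the cumulative regret is controlled by the familiar estimate $\tfrac{\beta^2}{2\alpha}+\tfrac{\alpha}{2}\sum_t g_t^2\le\beta G\sqrt{T}$. Dividing by $T$ and assembling the coordinates yields a bound of the form $R_\lambda^T(\beta)\le\beta G/\sqrt{T}$. Plugging this into the right-hand side of~(\ref{eq:conboundopt}) produces the stated feasibility bound, and plugging it (doubled) into the right-hand side of~(\ref{eq:expbound}) produces the stated exploitability bound; both substitutions are purely algebraic.

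The main obstacle is getting the constants to land exactly as written rather than the mechanics of the argument, which is routine. Two points need care: that $G$ genuinely bounds the gradient magnitude two-sidedly, since each $f_i(\text{SEQ}(\sigma_1^t))$ may be negative while $G$ is written as a maximum of $f_i$, so one must argue the relevant range is the quantity being controlled; and that the separable decomposition across the $k$ constraints does not smuggle an unwanted factor of $k$ into the final estimate. Once the one-dimensional regret bound is pinned down under the prescribed step size, the remainder is substitution into the two already-proved theorems.
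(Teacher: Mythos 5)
Your proposal is correct and takes essentially the same route as the paper: the paper's entire proof is the one-line application of the projected gradient descent bound \cite{Zinkevich03} to get $R_\lambda^T(\beta) \le \frac{\beta^2}{2\alpha^T} + \frac{G^2}{2}\sum_{t=1}^T \alpha^t \le \frac{\beta G}{\sqrt{T}}$, followed by exactly the algebraic substitution into Theorems~\ref{thm:ccfrex} and~\ref{thm:conboundopt} that you describe. The two constant-level caveats you flag --- that $G=\max_{i,\vect{x}} f_i(\vect{x})$ bounds the per-round gradients only one-sidedly, and that assembling the $k$ per-coordinate regrets threatens an extra factor of $k$ --- are genuine subtleties, but the paper's proof does not address them either, so your attempt is, if anything, slightly more careful than the published argument.
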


\begin{proof}
This follows from using the projected gradient descent regret bound \cite{Zinkevich03} to give\\ $R_\lambda^T(\beta) \leq \frac{\beta^2}{2\alpha^T} + \frac{G^2}{2}\sum_{t=1}^T\alpha^t \leq \frac{\beta G}{\sqrt{T}}$.
\end{proof}

Finally, we discuss how to choose $\beta$. When there is a minimum acceptable constrant violation, $\beta$ can be selected with Theorem~\ref{thm:conbound} to guarantee that the violation is no more than the specified value, either asymptotically or after a specified number of iterations $T$. When no amount of constraint violation is acceptable, $\beta$ should be chosen such that $\beta \geq \lambda_i^*$ by Theorem~\ref{thm:conboundopt}. If $\vect{\lambda}^*$ is unknown, CCFR can be run with an arbitrary $\beta$ for a number of iterations. If the average $\frac{1}{T}\sum_{t=1}^T\lambda_i^t$ is close to $\beta$, then $\beta \leq \lambda^*_i$, so $\beta$ is doubled and CCFR run again. Otherwise, it is guaranteed that $\beta > \lambda^*_i$ and CCFR will converge to a solution with no constraint violation.

\section{Related Work}

To the best of our knowledge, no previous work has proposed a technique for solving either of the optimizations (\ref{eq:conopt}) or (\ref{eq:lgopt}) for general constraints in extensive-form games. Optimization (\ref{eq:lgopt}) belongs to a general class of saddle point optimizations for which a number of accelerated methods with $\mathcal{O}(1/T)$ convergence have been proposed \cite{Nemirovski04,Nesterov05a,Nesterov05b,Juditsky11,Chambolle11}. These methods have been applied to unconstrained equilibrium computation in extensive-form games using a family of prox functions initially proposed by Hoda et. al. \cite{Hoda10,Kroer15,Kroer17}. Like CFR, these algorithms could be extended to solve the optimization (\ref{eq:lgopt}).

Despite a worse theoretical dependence on $T$, CFR is preferred to accelerated methods as our base algorithm for a number of practical reasons.
\begin{itemize}
\item CFR can be easily modified with a number of different sampling schemes, adapting to sparsity and achieving greatly improved convergence over the deterministic version \cite{Lanctot09}. Although the stochastic mirror prox algorithm has been used to combine an accelerated update with sampling in extensive-form games, each of its iterations still requires walking each player's full strategy space to compute the prox functions, and it has poor performance in practice \cite{Kroer15}.
\item CFR has good empirical performance in imperfect recall games \cite{Waugh09b} and even provably converges to an equilibrium in certain subclasses of well-formed games \cite{Lanctot12,Lisy16}, which we will make use of in Section~\ref{sec:tg}. The prox function used by the accelerated methods is ill-defined in all imperfect recall games.
\item CFR theoretically scales better with game size than do the accelerated techniques. The constant $M$ in the bounds of Theorems~\ref{thm:ccfrex}-\ref{thm:conboundopt} is at worst $|\mathcal{I}|$, and for many games of interest is closer to $|\mathcal{I}|^{1/2}$ (\citeauthor{Burch17}~\citeyear{Burch17}, Section~3.2). The best convergence bound for an accelerated method depends in the worst case on $|\mathcal{I}|^2 2^d$ where $d$ is the depth of the game tree, and is at best $|\mathcal{I}|2^d$ \cite{Kroer17}.
\item The CFR update can be modified to CFR+ to give a guaranteed bound on tracking regret and greatly improve empirical performance \cite{Tammelin15}. CFR+ has been shown to converge with initial rate faster than $\mathcal{O}(1/T)$ in a variety of games (\citeauthor{Burch17}~\citeyear{Burch17}, Sections~4.3-4.4).
\item Finally, CFR is not inherently limited to $\mathcal{O}(1/\sqrt{T})$ worst-case convergence. Regret minimization algorithms can be optimistically modified to give $\mathcal{O}(1/T)$ convergence in self-play \cite{Rakhlin13}. Such a modification has been applied to CFR (\citeauthor{Burch17}~\citeyear{Burch17}, Section~4.4).
\end{itemize}

We describe CCFR as an extension of deterministic CFR for ease of exposition. All of the CFR modifications described in this section can be applied to CCFR out-of-the-box.

\section{Experimental evaluation}

We present two domains for experimental evaluation in this paper. In the first, we use constraints to model a secondary objective when generating strategies in a model security game. In the second domain, we use constraints for opponent modeling in a small poker game. We demonstrate that using constraints for modeling data allows us to learn counter-strategies that approach optimal counter-strategies as the amount of data increases. Unlike previous opponent modeling techniques for poker, we do not require our data to contain full observations of the opponent's private cards for this guarantee to hold.

\subsection{Transit game}
\label{sec:tg}

\begin{figure}[t]
\centering
\tgScheme
\caption{Transit Game}
\label{fig:tg}
\end{figure}

The \emph{transit game} is a model security game introduced in \cite{Bosansky15}. With size parameter $w$, the game is played on an 8-connected grid of size $2w \times w$ (see Figure~\ref{fig:tg}) over $d=2w+4$ time steps. One player, the \emph{evader}, wishes to cross the grid from left to right while avoiding the other player, the \emph{patroller}. Actions are movements along the edges of the grid, but each move has a probability $0.1$ of failing. The evader receives $-1$ utils for each time he encounters the patroller, $1$ utils when he escapes on reaching the east end of the grid, and $-0.02$ utils for each time step that passes without escaping. The patroller receives the negative of the evader's utils, making the game zero-sum. The players observe only their own actions and locations.

The patroller has a secondary objective of minimizing the risk that it fails to return to its base ($s^0_p$ in Figure~\ref{fig:tg}) by the end of the game. In the original formulation, this was modeled using a large utility penalty when the patroller doesn't end the game at its base. For the reasons discussed in the introduction, it is more natural to model this objective as a linear constraint on the patroller's strategy, bounding the maximum probability that it doesn't return to base.

For our experiments, we implemented CCFR on top of the NFGSS-CFR algorithm described in \cite{Lisy16}. In the NFGSS framework, each information set is defined by only the current grid state and the time step; history is not remembered. This is a case of imperfect recall, but our theory still holds as the game is well-formed. The constraint on the patroller is defined as 
\begin{equation*}
\sum_{s^d,a} \pi^{\sigma_p}(s^d)\sigma(s^d,a) \sum_{s^{d+1} \neq s^0_p} T(s^d,a,s^{d+1}) \leq b_r
\end{equation*}
 where $s^d,a$ are state action pairs at time step $d$, $T(s^d,a,s^{d+1})$ is the probability that $s^{d+1}$ is the next state given that action $a$ is taken from $s^d$, and $b_r$ is the chosen risk bound. This is a well-defined linear constraint despite imperfect recall, as $\pi^{\sigma_p}(s^d)$ is a linear combination over the sequences that reach $s^d$. We update the CCFR constraint weights $\lambda$ using stochastic gradient ascent with constant step size $\alpha^t=1$, which we found to work well across a variety of game sizes and risk bounds. In practice, we found that bounding $\vect{\lambda}$ was unnecessary for convergence.
 
Previous work has shown that double oracle (DO) techniques outperform solving the full game linear program (LP) in the unconstrained transit game \cite{Bosansky15,Lisy16}. However, an efficient best response oracle exists in the unconstrained setting only because a best response is guaranteed to exist in the space of pure strategies, which can be efficiently searched. Conversely, constrained best responses might exist only in the space of mixed strategies, meaning that the best response computation requires solving an LP of comparable size to the LP for the full game Nash equilibrium. This makes DO methods inappropriate for the general constrained setting, so we omit comparison to DO methods in this work.

\paragraph{Results}

\begin{figure*}[t]
\centering
\begin{subfigure}[t]{0.33\linewidth}
\centering
  \includegraphics[width=\linewidth]{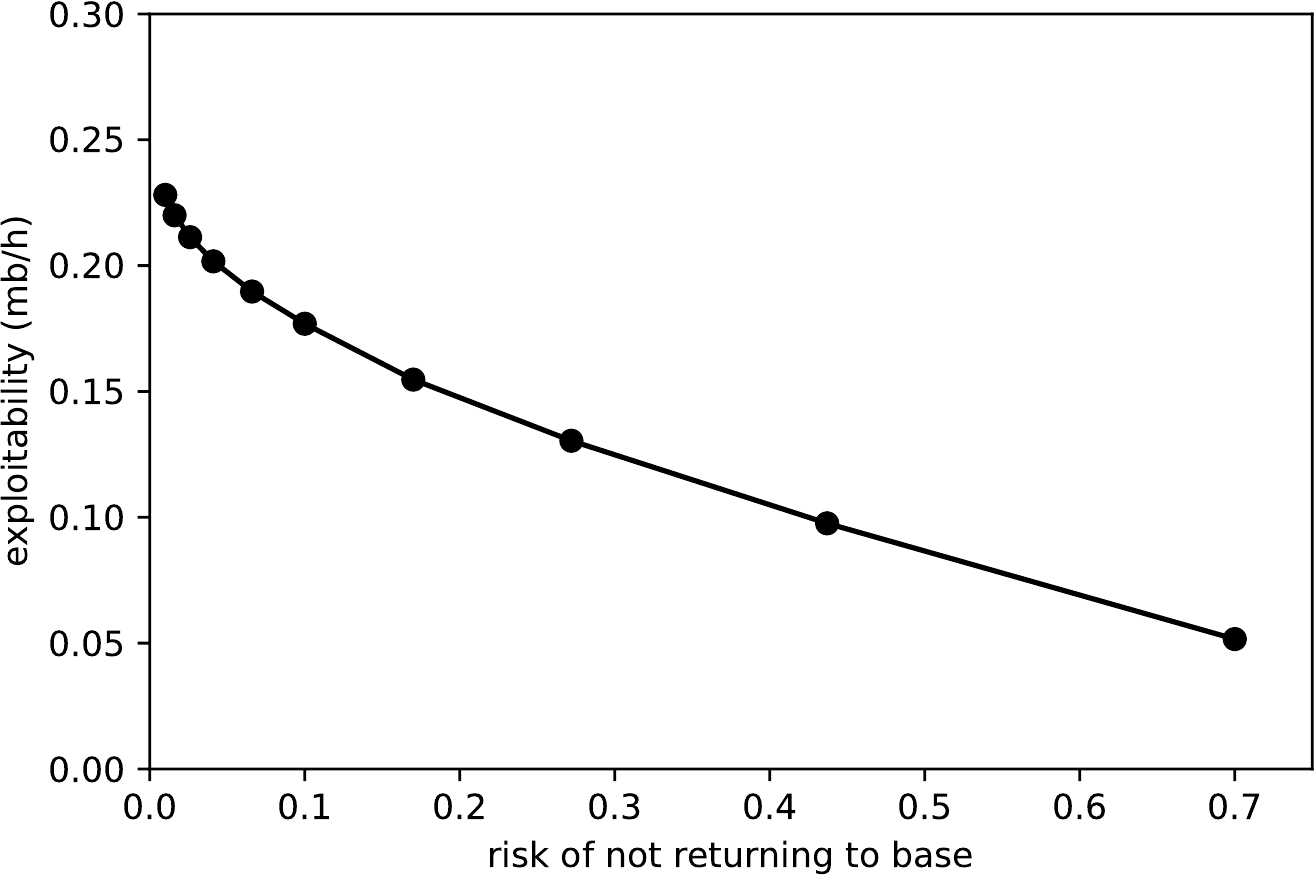}
  \caption{Risk vs Exploitability}
  \label{fig:tgRisk}
\end{subfigure}%
\begin{subfigure}[t]{0.33\linewidth}
\centering
  \includegraphics[width=\linewidth]{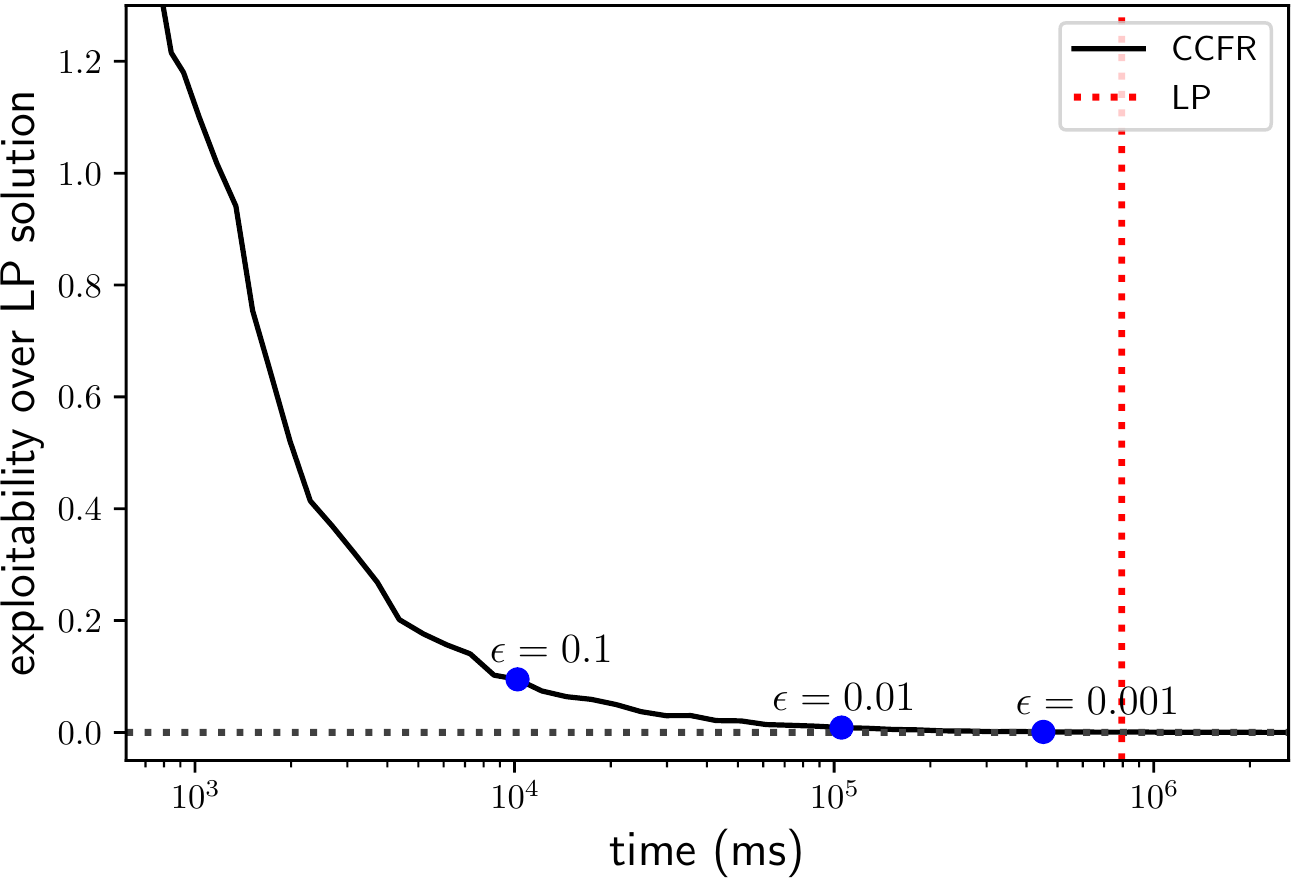}
  \caption{CCFR convergence}
  \label{fig:tgConv}
\end{subfigure}%
\begin{subfigure}[t]{0.33\linewidth}
\centering
  \includegraphics[width=\linewidth]{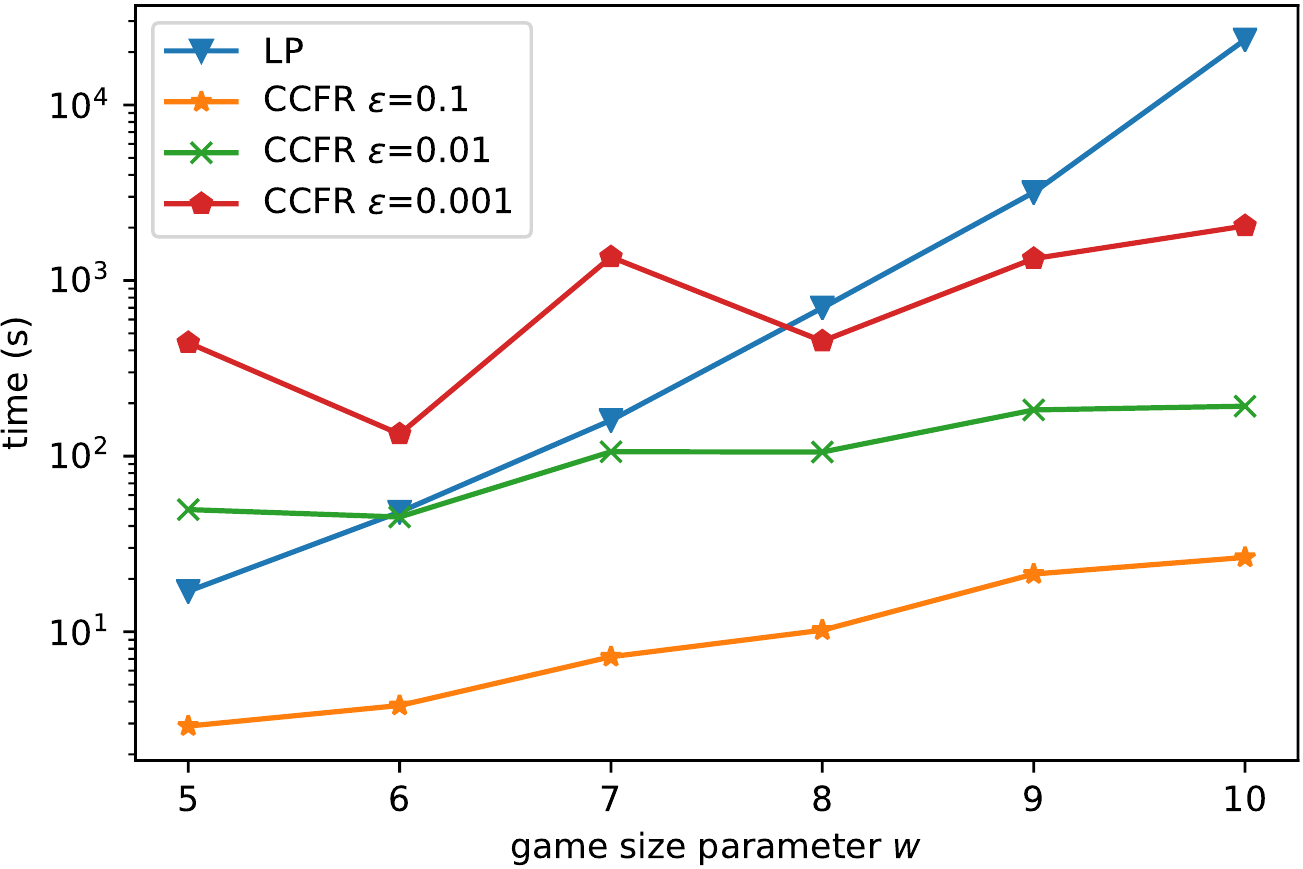}
  \caption{Convergence time}
  \label{fig:tgTime}
\end{subfigure}
\caption{Risk and exploitability of final CCFR strategies with game size $w=8$ (a), convergence of CCFR exploitability to LP exploitability with $w=8$ and risk bound $b_r = 0.1$ (b), and dependence of convergence time on game size for LP and CCFR methods (c). All algorithms are deterministic, so times are exact.}
\end{figure*}
We first empirically demonstrate that CCFR converges to optimality by comparing its produced strategies with strategies produced by solving the LP representation of the game with the simplex solver in IBM ILOG CPLEX 12.7.1. Figure~\ref{fig:tgRisk} shows the risk and exploitability for strategies produced by running CCFR for 100,000 iterations on a game of size $w=8$, with a variety of values for the risk bound $b_r$. In each case, the computed strategy had risk within $0.001$ of the specified bound $b_r$, and exploitability within $0.001$ of the corresponding LP strategy (not shown because the points are indistinguishable). The convergence over time for one particular case, $b_r=0.1$, is shown in Figure~\ref{fig:tgConv}, where the plotted value is the difference in exploitability between the average CCFR strategy and the LP strategy, shown with a log-linear scale. The vertical line shows the time used to compute the LP strategy.

Convergence times for the CPLEX LP and CCFR with risk bound $b_r=0.1$ are shown on a log scale for a variety of game sizes $w$ in Figure~\ref{fig:tgTime}. The time for CCFR is presented for a variety of precisions $\epsilon$, which bounds both the optimality of the final exploitability and the violation of the risk bound. The points for game size $w=8$ are also shown in Figure~\ref{fig:tgConv}. The LP time is calculated with default precision $\epsilon=10^{-6}$. Changing the precision to a higher value actually results in a slower computation, due to the precision also controlling the size of allowed infeasibility in the Harris ratio test \cite{Klotz13}.

Even at $w=6$, a game which has relatively small strategy sizes of $\sim$6,000 values, CCFR can give a significant speedup for a small tradeoff in precision. At $w=8$ and larger, the LP is clearly slower than CCFR even for the relatively modest precision of $\epsilon=0.001$. For game size $w=10$, with strategy sizes of $\sim$25,000 values, the LP is more than an order of magnitude slower than high precision CCFR.

\subsection{Opponent modeling in poker}

In multi-agent settings, strategy constraints can serve an additional purpose beyond encoding secondary objectives. Often, when creating a strategy for one agent, we have partial information on how the other agent(s) behave. A way to make use of this information is to solve the game with constraints on the other agents' strategies, enforcing that their strategy in the solution is consistent with their observed behavior. As a motivating example, we consider poker games in which we always observe our opponent's actions, but not necessarily the private card(s) that they hold when making the action.

In poker games, if either player takes the \emph{fold} action, the other player automatically wins the game. Because the players' private cards are irrelevant to the game outcome in this case, they are typically not revealed. We thus consider the problem of opponent modeling from observing past games, in which the opponent's \emph{hand} of private card(s) is only revealed when a \emph{showdown} is reached and the player with the better hand wins. Most previous work in opponent modeling has either assumed full observation of private cards after a fold \cite{Johanson07,Johanson09} or has ignored observations of opponent actions entirely, instead only using observed utilities \cite{Bard13}. The only previous work which uses these partial observations has no theoretical guarantees on solution quality \cite{Ganzfried11}.

We first collect data by playing against the opponent with a probe strategy, which is a uniformly random distribution over the non-fold actions. To model the opponent in an unbiased way, we generate two types of sequence-form constraints from this data. First, for each possible sequence of public actions and for each of our own private hands, we build an unbiased confidence interval on the probability that we are dealt the hand and the public sequence occurs. This probability is a weighted sum of opponent sequence probabilities over their possible private cards, and thus the confidence bounds become linear sequence-form constraints. Second, for each terminal history that is a showdown, we build a confidence interval on the probability that the showdown is reached. In combination, these two sets of constraints guarantee that the CCFR strategy converges to a best response to the opponent strategy as the number of observed games increases. A proof of convergence to a best response and full details of the constraints are provided in the appendix.

\paragraph{Infeasible constraints}

Because we construct each constraint separately, there is no guarantee that the full constraint set is simultaneously feasible. In fact, in our experiments the constraints were typically mildly infeasible. However, this is not a problem for CCFR, which doesn't require feasible constraints to have well-defined updates. In fact, because we bound the Lagrange multipliers, CCFR still theoretically converges to a sensible solution, especially when the total infeasibility is small. For more details on how CCFR handles infeasibility, see the appendix.

\paragraph{Results}

We ran our experiments in Leduc Hold'em \cite{Southey05}, a small poker game played with a six card deck over two betting rounds. To generate a target strategy profile to model, we solved the "JQ.K/pair.nopair" abstracted version of the game \cite{Waugh09}. We then played a probe strategy profile against the target profile to generate constraints as described above, and ran CCFR twice to find each half of a counter-profile that is optimal against the set of constrained profiles. We used gradient ascent with step size $\alpha^t = 1000/\sqrt{t}$ to update the $\vect{\lambda}$ values, and ran CCFR for $10^6$ iterations, which we found to be sufficient for approximate convegence with $\varepsilon < 0.001$.

\begin{figure}[th]
\centering
  \includegraphics[width=0.85\linewidth]{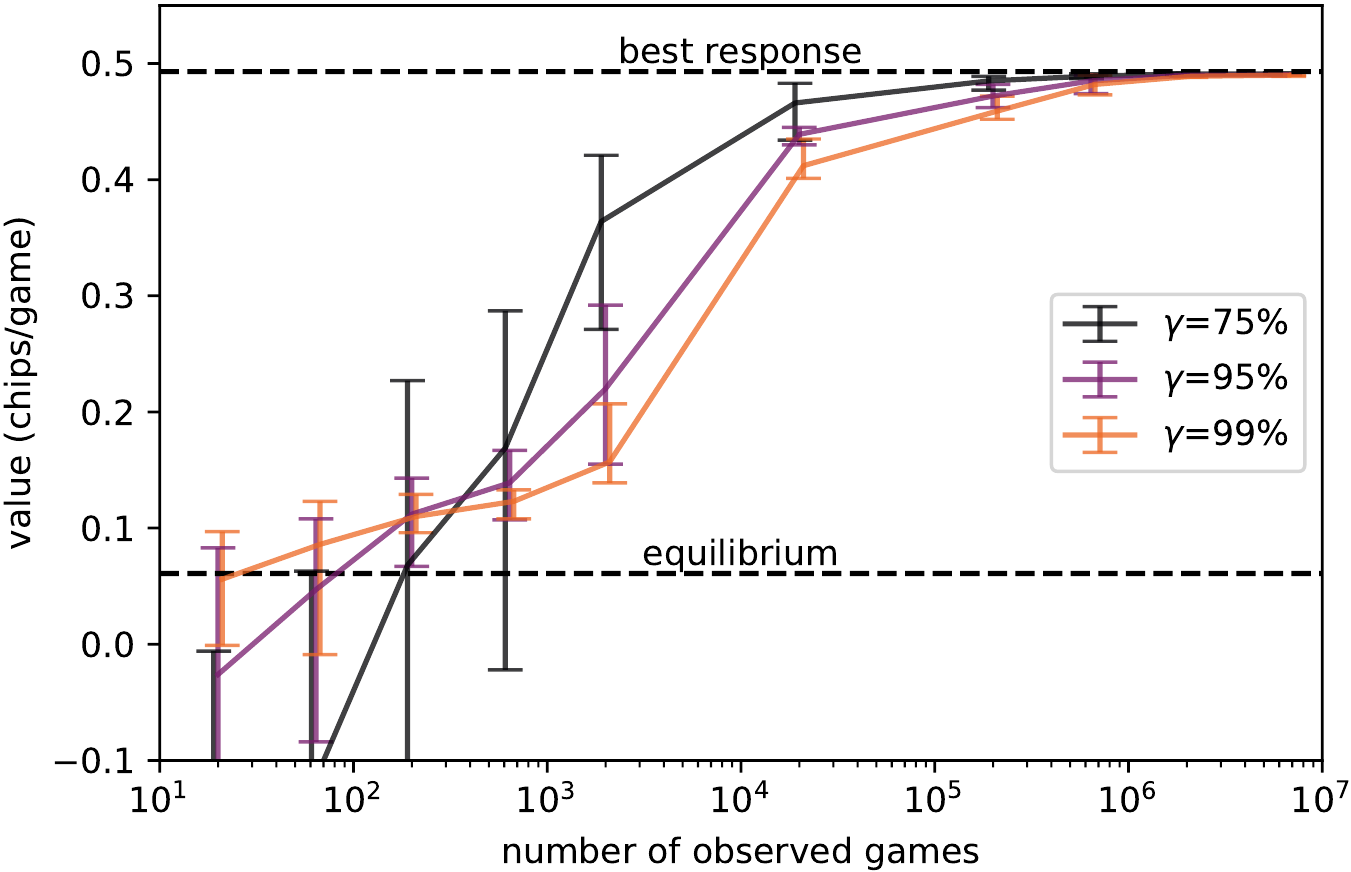}
  \caption{Performance of counter-profiles trained with CCFR against a target profile in Leduc Hold'em. Results are averaged over 10 runs, with minimum and maximum values shown as error bars. Values achieved by a Nash equilibrium profile and a best response profile are shown as horizontal lines for reference.}
  \label{fig:leduc}
\end{figure}

We evaluate how well the trained counter-profile performs when played against the target profile, and in particular investigate how this performance depends on the number of games we observe to produce the counter-profile, and on the confidence $\gamma$ used for the confidence interval constraints. Results are shown in Figure~\ref{fig:leduc}, with a log-linear scale. With a high confidence $\gamma=99\%$ (looser constraints), we obtain an expected value that is better than the equilibrium expected value with fewer than 100 observed games on average, and with fewer than 200 observed games consistently. Lower confidence levels (tighter constraints) resulted in more variable performance and poor average value with small numbers of observed games,  but also faster learning as the number of observed games increased.  For all confidence levels, the expected value converges to the best response value as the number of observed games increases. 

\section{Conclusion}

Strategy constraints are a powerful modeling tool in extensive-form games. Prior to this work, solving games with strategy constraints required solving a linear program, which scaled poorly to many of the very large games of practical interest. We introduced CCFR, the first efficient large-scale algorithm for solving extensive-form games with general strategy constraints. We demonstrated that CCFR is effective at solving sequential security games with bounds on acceptable risk. We also introduced a method of generating strategy constraints from partial observations of poker games, resulting in the first opponent modeling technique that has theoretical guarantees with partial observations. We demonstrated the effectiveness of this technique for opponent modeling in Leduc Hold'em.

\section{Acknowledgments}

The transit game experiments were implemented with code made publically available by the game theory group of the Artificial Intelligence Center at Czech Technical University in Prague. This research was supported by Alberta Innovates, Alberta Advanced Education, and the Alberta Machine Intelligence Institute (Amii). Computing resources were provided by Compute Canada and Calcul Qu{\'e}bec.

\fontsize{9pt}{10pt}
\selectfont
\bibliographystyle{aaai}
\bibliography{constrainedcfr}

\onecolumn
\normalsize
\appendix

\setcounter{thm}{3}

\section{Notation}

Define $\Psi(\sigma_i)=\text{SEQ}(\sigma_i)$ for brevity. Let $\tilde{u}$ be the modified utility function:
\begin{equation}
\tilde{u}(\sigma_1,\sigma_2,\lambda) = u(\sigma_1,\sigma_2) - \sum_{i=1}^k \lambda_i f_i(\Psi(\sigma_1))
\end{equation}

For any history $h \in H$, define
\begin{equation}
Z[h] = \{ z \in Z \mid h \sqsubset z \}
\end{equation}
to be the set of terminal histories reachable from $h$. In addition, define
\begin{equation}
Z^1[h] = \{ z \in Z \mid h \sqsubset z \text{ and} (\nexists h' \in H_1) h \sqsubset h' \sqsubset z \}
\end{equation}
to be the subset of those histories which include no player 1 actions after the action at $h$. Define
\begin{equation}
Succ(h,a) = \{ h' \in H_1 \mid h \sqsubset h' \text{ and} (\nexists h'' \in H_1) h \sqsubset h'' \sqsubset h' \}
\end{equation}
to be set of histories where player 1 might act next after taking action $a$ at $H$. Also, for any $h \in H$ define
\begin{equation}
D(h) = \{ h' \in H_1 \mid h \sqsubseteq h' \}
\end{equation}
to be the set of all histories where player 1 might act after $h$. Note that if $P(h) = 1$, then $h \in D(h)$.

We extend all of the preceding definitions to information sets in the natural way:
\begin{equation}
\begin{split}
Z[I] &= \bigcup_{h \in I} Z[h] \\
Z^1[I] &= \bigcup_{h \in I} Z^1[h]
\end{split}
\qquad
\begin{split}
Succ(I,a) &= \{ I' \in \mathcal{I}_1 \mid (\exists h' \in I') h' \in \bigcup_{h \in I} Succ(h,a) \}\\ 
D(I) &= \{ I' \in \mathcal{I}_1 \mid (\exists h' \in I') h' \in \bigcup_{h \in I} D(h) \}
\end{split}
\end{equation}

Consider two information sets $I,I' \in \mathcal{I}_1$ such that $I' \in D(I)$. By perfect recall, the series of actions that player 1 takes from $I$ to $I'$ must be unique, no matter which $h \in I$ is the starting state. Thus we may define $\pi_1^{\sigma}(I,I') = \pi_1^{\sigma}(h,h')$ for any $h \in I$ and $h' \in I'$ where $h \sqsubseteq h'$. Similarly, for $z \in Z$, we can define $\pi^{\sigma}(I,z) = \pi^{\sigma}(h,z)$ where $h$ is the unique $h \in I$ such that $h \sqsubset z$ (or $\pi^{\sigma}(I,z) = 0$ if no such $h$ exists).

For $a \in A(I)$, we will use $v(I,\sigma,a)$ as short for $v(I,\sigma|_{I \to a})$, where $\sigma|_{I \to a}$ is the strategy profile that always takes action $a$ at $I$, but everywhere else is identical to $\sigma$.

\section{Proof of Theorems}

For any information set $I \in \mathcal{I}_1$, action $a \in A(I)$, and $t \in \{1,...,T\}$, let $c^t(I,a)$ be the constraint tilt added to $v^t(I,a)$, i.e.
\begin{equation}
c^t(I,a) = \sum_{i=1}^k \lambda^t_i \nabla_{(I,a)}f_i(\Psi(\sigma^t_1))
\end{equation}

Given a $t \in \{1,...,T\}$, define the tilted counterfactual value $\tilde{v}^t$ for strategy profile $\sigma$, information set $I \in \mathcal{I}_1$, and action $a \in A(I)$ by the recurrences
\begin{align}
\tilde{v}^t(I,\sigma,a) &= \sum_{z \in Z^1[Ia]} \pi^{\sigma_{2}}_{-1}(z)u(z) + \sum_{I'\in Succ(I,a)}\tilde{v}^t(I',\sigma) - c^t(I,a)\label{tcfva-recur}\\
\tilde{v}^t(I,\sigma) &= \sum_{a \in A(I)} \sigma_1(I,a) \tilde{v}^t(I,\sigma,a)\label{tcfv-recur}
\end{align}

\begin{lem}
Given a strategy profile $\sigma$ and a $t \in \{1,...,T\}$, if $\tilde{v}^t$ is defined by recurrences (\ref{tcfva-recur}) and (\ref{tcfv-recur}) for every $I \in \mathcal{I}_1$ and $a \in A(I)$, then for any such $I,a$ the following hold:
\begin{align}
\tilde{v}^t(I,\sigma,a) &= v(I,\sigma,a) - c^t(I,a) - \sum_{\substack{I' \in D(I)\\Ia \sqsubseteq I'}} \sum_{a' \in A(I')} \pi_1^{\sigma_1}(Ia,I'a') c^t(I',a')\label{tcfva-closed}\\
\tilde{v}^t(I,\sigma) &= v(I,\sigma) - \sum_{I' \in D(I)} \sum_{a \in A(I')} \pi_1^{\sigma_1}(I,I'a) c^t(I',a)\label{tcfv-closed}
\end{align}
\label{lemcfv}
\end{lem}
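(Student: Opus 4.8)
The plan is to prove both closed forms (\ref{tcfva-closed}) and (\ref{tcfv-closed}) simultaneously by induction on the information sets $I \in \mathcal{I}_1$ in reverse topological order (equivalently, by induction on the finite poset $D(I)$, deepest sets first). Throughout I will use two standard untilted facts about the ordinary counterfactual value, which follow directly from its definition by regrouping terminals according to whether player~1 acts again: the action decomposition $v(I,\sigma,a) = \sum_{z \in Z^1[Ia]} \pi^{\sigma_2}_{-1}(z)u(z) + \sum_{I'\in Succ(I,a)} v(I',\sigma)$, and the averaging identity $v(I,\sigma) = \sum_{a \in A(I)} \sigma_1(I,a) v(I,\sigma,a)$. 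The base case is an information set $I$ with $Succ(I,a) = \emptyset$ for every $a$, so that $D(I) = \{I\}$: here recurrence (\ref{tcfva-recur}) reduces to $\tilde v^t(I,\sigma,a) = \sum_{z}\pi^{\sigma_2}_{-1}(z)u(z) - c^t(I,a) = v(I,\sigma,a) - c^t(I,a)$, and the descendant sum in (\ref{tcfva-closed}) is empty since no $I' \in D(I)$ satisfies $Ia \sqsubseteq I'$.

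For the inductive step on the action value, I would assume (\ref{tcfv-closed}) holds at every $I' \in Succ(I,a)$ (these are strictly deeper), substitute them into (\ref{tcfva-recur}), and collapse the untilted part using the action decomposition above, leaving $\tilde v^t(I,\sigma,a) = v(I,\sigma,a) - c^t(I,a) - \sum_{I' \in Succ(I,a)} \sum_{I'' \in D(I')} \sum_{a''} \pi_1^{\sigma_1}(I',I''a'')\,c^t(I'',a'')$. The remaining work is to reindex this triple sum into the single descendant sum of (\ref{tcfva-closed}). Two perfect-recall facts do this: first, the index sets form a disjoint union
\begin{equation*}
\{\,I'' \in D(I) : Ia \sqsubseteq I''\,\} \;=\; \bigsqcup_{I' \in Succ(I,a)} D(I'),
\end{equation*}
because the sequence of player-1 information sets leading to any such $I''$ is unique, so $I''$ lies below exactly one immediate successor $I'$; second, since player~1 takes no action between $Ia$ and its immediate successor $I'$, the reach factorizes as $\pi_1^{\sigma_1}(Ia,I''a'') = \pi_1^{\sigma_1}(I',I''a'')$. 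Together these give exactly $\sum_{I''\in D(I),\,Ia\sqsubseteq I''}\sum_{a''}\pi_1^{\sigma_1}(Ia,I''a'')\,c^t(I'',a'')$, establishing (\ref{tcfva-closed}).

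For the information-set value I would then weight the just-proved (\ref{tcfva-closed}) by $\sigma_1(I,a)$ and sum over $a$, using the averaging identity on the $v(I,\sigma,a)$ terms. The tilt terms split into an $I'=I$ contribution $\sum_a \sigma_1(I,a)c^t(I,a)$, which equals $\sum_a \pi_1^{\sigma_1}(I,Ia)c^t(I,a)$ since $\pi_1^{\sigma_1}(I,Ia)=\sigma_1(I,a)$, and the descendant contributions, for which $\sigma_1(I,a)\pi_1^{\sigma_1}(Ia,I'a') = \pi_1^{\sigma_1}(I,I'a')$ and the sets $\{I' : Ia \sqsubseteq I'\}$ over distinct $a$ partition $D(I)\setminus\{I\}$ (again by uniqueness of the action sequence at $I$). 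Recombining the $I'=I$ term with these yields the full sum over all of $D(I)$, which is (\ref{tcfv-closed}). I expect the main obstacle to be precisely this bookkeeping of the nested sums over descendant information sets: getting the disjoint-union decompositions and the reach-probability factorizations right, each of which rests on the perfect-recall guarantee that the intervening player-1 action sequence between two information sets is unique.
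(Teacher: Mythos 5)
Your proposal is correct and follows essentially the same route as the paper's proof: strong induction on the descendant set $D(I)$ (base case $D(I)=\{I\}$), substitution of the inductive hypothesis at each $I' \in Succ(I,a)$ into the recurrence, the perfect-recall disjoint-union identity $\{I'' \in D(I) : Ia \sqsubseteq I''\} = \bigcup_{I' \in Succ(I,a)} D(I')$ together with the reach-probability factorization, and finally weighting by $\sigma_1(I,a)$ to obtain the information-set value. The bookkeeping steps you flag as the main obstacle are exactly the ones the paper handles, and your treatment of them is sound.
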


\begin{proof}
We proceed by strong induction on $|D(I)|$. In the base case $D(I) = \{I\}$, we have that $Succ(I,a)$ is empty for each $a \in A(I)$, that there is no $I' \in D(I)$ such that $Ia \sqsubseteq I'$, and that $Z[I] = Z^1[I]$. Thus:
\begin{align*}
\tilde{v}^t(I,\sigma,a) &= \sum_{z \in Z^1[Ia]} \pi^{\sigma_{2}}_{-1}(z)u(z) + \sum_{I'\in Succ(I,a)}\tilde{v}^{\sigma}_t(I') - c^t(I,a)\\
&= \sum_{z \in Z[Ia]} \pi^{\sigma_{2}}_{-1}(z)u(z) - c^t(I,a)\\
&= v(I,\sigma,a) - c^t(I,a) - \sum_{\substack{I' \in D(I)\\Ia \sqsubseteq I'}} \sum_{a' \in A(I')} \pi_1^{\sigma_1}(Ia,I'a') c^t(I',a')
\end{align*}
and
\begin{align*}
\tilde{v}^t(I,\sigma) &= \sum_{a \in A(I)} \sigma_1(I,a) \tilde{v}^t(I,\sigma,a)\\
&= \sum_{a \in A(I)} \sigma_1(I,a) \left[ \sum_{z \in Z[Ia]} \pi^{\sigma_{2}}_{-1}(z)u(z) - c^t(I,a) \right]\\
&= \sum_{z \in Z[I]} \pi^{\sigma_1}_1(I,z) \pi^{\sigma_{2}}_{-1}(z)u(z) - \sum_{a \in A(I)} \pi_1^{\sigma_1}(I,Ia) c^t(I,a)\\
&= v(I,\sigma) - \sum_{I' \in D(I)} \sum_{a \in A(I')} \pi_1^{\sigma_1}(I,I'a) c^t(I',a)
\end{align*}

For the inductive step, we consider some $I \in \mathcal{I}_1$ and assume that the lemma holds for all $I'$ such that $|D(I')| < |D(I)|$. In particular, note that if $I' \in succ(I)$, then $D(I') \subset D(I)$, but $I \in D(I)$ and by perfect recall $I \notin D(I')$, so we can assume the lemma holds for all $I' \in succ(I)$.
\begin{align*}
\tilde{v}^t(I,\sigma,a) &= \sum_{z \in Z^1[Ia]} \pi^{\sigma_{-1}}_{2}(z)u(z) + \sum_{I'\in Succ(I,a)}\tilde{v}^t(I',\sigma) - c^t(I,a)\\
&= \sum_{z \in Z^1[Ia]} \pi_1^{\sigma_1}(Ia,z) \pi_{-1}^{\sigma_2}(z)u(z) - \sum_{I' \in Succ(I,a)} \left[ v^t(I',\sigma) - \sum_{I'' \in D(I')} \sum_{a' \in A(I'')} \pi_1^{\sigma}(I',I''a') c^t(I'',a') \right] - c^t(I,a)\\
&= \sum_{z \in Z^1[Ia]} \pi_1^{\sigma_1}(Ia,z) \pi_{-1}^{\sigma_2}(z)u(z) + \sum_{I' \in Succ(I,a)} \sum_{z \in Z[I']} \pi_1^{\sigma_1}(Ia,z)\pi_{-1}^{\sigma_2}(z)u(z)\\
&\hspace{20pt} - \sum_{I' \in Succ(I,a)} \sum_{I'' \in D(I')}\sum_{a \in A(I'')} \left[\pi_1^{\sigma_1}(Ia,I''a) c^t(I'',a)\right] - c^t(I,a) \\
&= \sum_{z \in Z[Ia]} \pi_1^{\sigma_1}(Ia,z) \pi_{-1}^{\sigma_2}(z)u(z) - \sum_{\substack{I' \in D(I)\\Ia \sqsubseteq I'}} \sum_{a \in A(I')} \left[\pi_1^{\sigma_1}(Ia,I'a) c^t(I',a)\right] - c^t(I,a)\\
\end{align*}
In the last step, we make use of the fact that
\begin{equation} 
Z[I] = Z^1[I] \cup \bigcup_{I' \in Succ(I)} Z[I'] \text{\qquad and\qquad} \{I' \in D(I) \mid Ia \sqsubseteq I' \} = \bigcup_{I' \in Succ(I,a)} D(I')
\end{equation}
and perfect recall requires that each of these component sets are disjoint.

Substituting $v(I,\sigma,a)$ for its definition completes the inductive step for $\tilde{v}^t(I,\sigma,a)$. Then we also have
\begin{align*}
\tilde{v}^t(I,\sigma) &= \sum_{a \in A(I)} \sigma_1(I,a) \tilde{v}^t(I,\sigma,a)\\
&= \sum_{a \in A(I)} \sigma_1(I,a) \left[ \sum_{z \in Z[Ia]} \pi_1^{\sigma_1}(Ia,z) \pi_{-1}^{\sigma_2}(z)u(z) - \sum_{\substack{I' \in D(I)\\Ia \sqsubseteq I'}} \sum_{a' \in A(I')} \left[\pi_1^{\sigma_1}(Ia,I'a') c^t(I',a')\right] - c^t(I,a) \right]\\
&= \sum_{z \in Z[I]} \pi_1^{\sigma_1}(I,z) \pi_{-1}^{\sigma_2}(z)u(z) - \sum_{\substack{I' \in D(I)\\I' \neq I}} \sum_{a \in A(I')} \left[\pi_1^{\sigma_1}(I,I'a) c^t(I',a)\right] - \sum_{a \in A(I)} \sigma_1(I,a) c^t(I,a)\\
&= v^{\sigma}(I) - \sum_{I' \in D(I)} \sum_{a \in A(I')} \pi_1^{\sigma_1}(I,I'a) c^t(I',a)\label{tcfv-closed}
\end{align*}

\end{proof}

Next we analyze the tilted regrets that are used in CCFR. Note that $\tilde{v}^t(I,a)$ and $\tilde{v}^t(I)$ are defined in the algorithm with recurrences  (\ref{tcfva-recur}) and (\ref{tcfv-recur}) where $\sigma = \sigma^t$. Then we also have
\begin{equation}
\tilde{r}^t(I,a) = \tilde{v}^t(I,\sigma^t,a) - \tilde{v}^t(I,\sigma^t)
\end{equation}
and for any $\sigma_1' \in \Sigma_1$ define
\begin{equation}
\tilde{r}^t(I,\sigma_1') = \sum_{a \in A(I)} \sigma_1'(I,a) \tilde{r}^t(I,a)
\end{equation}

\begin{lem} For any $I \in \mathcal{I}_1$, $\sigma_1' \in \Sigma_1$, and $t \in \{1,...,T\}$
\begin{equation}
\sum_{I' \in D(I)} \pi_1^{\sigma_1'}(I,I') \tilde{r}^t(I',\sigma_1') = \tilde{v}^t(I,(\sigma_1',\sigma_2^t)) - \tilde{v}^t(I,\sigma^t)
\end{equation}
\label{lemregsum}
\end{lem}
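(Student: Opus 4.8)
The plan is to prove the identity by strong induction on $|D(I)|$, directly from the recurrences (\ref{tcfva-recur}) and (\ref{tcfv-recur}) defining $\tilde{v}^t$, mirroring the structure of the proof of Lemma~\ref{lemcfv}. The crucial structural fact I would exploit is that the tilt $c^t(I,a)$ and the opponent's contribution $\pi^{\sigma_2}_{-1}$ appearing in (\ref{tcfva-recur}) do not depend on player~1's strategy: $c^t$ is pinned to the iterate $\sigma_1^t$, and in both of the profiles being compared, $(\sigma_1',\sigma_2^t)$ and $\sigma^t$, the opponent plays $\sigma_2^t$. Hence, for any fixed action $a$, the only part of $\tilde{v}^t(I,\cdot,a)$ that varies between these two profiles is the successor sum $\sum_{I'\in Succ(I,a)} \tilde{v}^t(I',\cdot)$.

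For the base case $D(I)=\{I\}$, each $Succ(I,a)$ is empty, so $\tilde{v}^t(I,(\sigma_1',\sigma_2^t),a)=\tilde{v}^t(I,\sigma^t,a)$, and the left-hand side collapses to $\tilde{r}^t(I,\sigma_1')$; unfolding the definition of $\tilde{r}^t(I,\sigma_1')$ and using $\sum_{a}\sigma_1'(I,a)=1$ recovers exactly $\tilde{v}^t(I,(\sigma_1',\sigma_2^t))-\tilde{v}^t(I,\sigma^t)$ via (\ref{tcfv-recur}).

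For the inductive step, I would first subtract the two action-value recurrences to obtain
\[
\tilde{v}^t(I,(\sigma_1',\sigma_2^t),a)-\tilde{v}^t(I,\sigma^t,a)=\sum_{I'\in Succ(I,a)}\bigl[\tilde{v}^t(I',(\sigma_1',\sigma_2^t))-\tilde{v}^t(I',\sigma^t)\bigr],
\]
and apply the inductive hypothesis (valid since $D(I')\subsetneq D(I)$ for $I'\in Succ(I,a)$) to each bracketed term. Weighting by $\sigma_1'(I,a)$ and summing over $a$, the term $\sum_{a}\sigma_1'(I,a)\tilde{v}^t(I,\sigma^t,a)$ equals $\tilde{r}^t(I,\sigma_1')+\tilde{v}^t(I,\sigma^t)$ by the definition of $\tilde{r}^t(I,\sigma_1')$, which isolates the desired $I'=I$ contribution. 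It then remains to reindex the resulting nested sum over successors into a single sum over $D(I)$.

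This reindexing is the step I expect to require the most care. It rests on two consequences of perfect recall: the decomposition $D(I)=\{I\}\,\cup\,\bigcup_{a}\bigcup_{I'\in Succ(I,a)} D(I')$ into disjoint pieces, and the multiplicativity of the player-1 reach probability, $\pi_1^{\sigma_1'}(I,I'')=\sigma_1'(I,a)\,\pi_1^{\sigma_1'}(I',I'')$ for $I'\in Succ(I,a)$ and $I''\in D(I')$ (here $\pi_1^{\sigma_1'}(I,I')=\sigma_1'(I,a)$ because $a$ is the unique player-1 action on the path from $I$ to its immediate successor $I'$). Substituting this product collapses the nested sums into $\sum_{I'\in D(I)}\pi_1^{\sigma_1'}(I,I')\tilde{r}^t(I',\sigma_1')$, completing the induction. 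The only genuine obstacle is the bookkeeping of which strategy governs which factor — keeping the tilt fixed at $\sigma_1^t$ and the opponent fixed at $\sigma_2^t$ while the deviation $\sigma_1'$ controls both the reach weights and the action mixing — rather than any deep analytic difficulty.
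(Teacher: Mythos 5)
Your proposal is correct and follows essentially the same route as the paper's own proof: strong induction on $|D(I)|$, using the fact that the terminal-utility terms and the tilt $c^t(I,a)$ (both pinned to $\sigma_2^t$ and $\sigma_1^t$ respectively) cancel between the two profiles, followed by the disjoint decomposition of $D(I)$ and multiplicativity of player-1 reach probabilities to collapse the nested sums. The only differences are cosmetic — you run the algebra from the value difference toward the sum over $D(I)$, while the paper expands the sum and simplifies it to the value difference, and your base case shortcuts through the coincidence of action values rather than unfolding to terminal utilities.
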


\begin{proof}
First note that
\begin{align*}
\tilde{r}^t(I,\sigma_1') &= \sum_{a \in A(I)} \sigma_1'(I,a) \left[ \tilde{v}^t(I,\sigma^t,a) - \tilde{v}^t(I,\sigma^t) \right]\\
&= \sum_{a \in A(I)} \sigma_1'(I,a) \tilde{v}^t(I,\sigma^t,a) - \sum_{a \in A(I)}\sigma_1^t(I,a)\tilde{v}^t(I,\sigma^t,a)\\
&= \sum_{a \in A(I)} \left(\sigma_1'(I,a) - \sigma_1^t(I,a)\right) \tilde{v}^t(I,\sigma^t,a)
\end{align*}
We now proceed by strong induction on $|D(I)|$. For the base case, consider $I\in\mathcal{I}_1$ where $D(I) = \{I\}$. Using that $|Succ(I)|=0$:
\begin{align*}
\sum_{I' \in D(I)} \pi_1^{\sigma_1'}(I,I') \tilde{r}^t(I',\sigma_1') &= \tilde{r}^t(I',\sigma_1')\\
&= \sum_{a \in A(I)}(\sigma_1'(I,a) - \sigma_1^t(I,a))\tilde{v}^t(I',\sigma^t,a)\\
&= \sum_{a \in A(I)}(\sigma_1'(I,a) - \sigma_1^t(I,a)) \left[\sum_{z \in Z[Ia]}\pi_{-1}^{\sigma_2^t}(z)u(z) - c^t(I,a) \right]\\
&= v^t(I,(\sigma_1',\sigma_2^t)) - \sum_{a\in A(I)} \sigma_1'(I,a) c^t(I,a) - \left[v^t(I,\sigma^t) - \sum_{a\in A(I)} \sigma_1^t(I,a) c^t(I,a) \right]\\
&= \tilde{v}^t(I,(\sigma_1',\sigma_2^t)) - \tilde{v}^{t}(I,\sigma^t)
\end{align*}
For the inductive step, we consider some $I \in \mathcal{I}_1$ and assume that the lemma holds for all $I'$ such that $|D(I')| < |D(I)|$. In particular, we assume it holds for each $I' \in Succ(I)$.
\begin{align*}
\sum_{I' \in D(I)} \pi_1^{\sigma_1'}(I,I') \tilde{r}^t(I',\sigma_1') &= \tilde{r}^t(I,\sigma_1') + \sum_{I' \in Succ(I)} \pi^{\sigma_1'}(I,I') \sum_{I'' \in D(I')} \pi_1^{\sigma_1'}(I',I'') \tilde{r}^t(I'',\sigma_1')\\
&= \tilde{r}^t(I,\sigma_1') + \sum_{I' \in Succ(I)} \pi^{\sigma_1'}(I,I') \left( \tilde{v}^{\sigma_1',\sigma_2^t}_{\lambda^t}(I') - \tilde{v}^t(I') \right)\\
&= \tilde{r}^t(I,\sigma_1') + \sum_{a \in A(I)} \sigma_1'(I,a) \left[ \sum_{I' \in Succ(I,a)} \tilde{v}^t(I',(\sigma_1',\sigma_2^t)) - \sum_{I' \in Succ(I,a)} \tilde{v}^t(I',\sigma^t) \right]\\
&= \tilde{r}^t(I,\sigma_1') + \sum_{a \in A(I)} \sigma_1'(I,a) \left[ \left( \tilde{v}^t(I,(\sigma_1',\sigma_2^t),a) - \sum_{z \in Z^1[Ia]} \pi_{-1}^{\sigma_2^t}(z)u(z) + c^t(I,a) \right) \right.\\
&\hspace{150pt}\left.- \left( \tilde{v}^t(I,\sigma^t,a) - \sum_{z \in Z^1[Ia]} \pi_{-1}^{\sigma_2^t}(z)u(z) +  c^t(I,a)\right)\right]\\
&= \tilde{r}^t(I,\sigma_1') + \tilde{v}^t(I,(\sigma_1',\sigma_2^t)) - \sum_{a \in A(I)} \sigma_1'(I,a)\tilde{v}^t(I,\sigma^t,a)\\
&= \sum_{a \in A(I)}\sigma_1'(I,a) \tilde{v}^t(I,\sigma^t,a) - \tilde{v}^t(I,\sigma^t) + \tilde{v}^t(I,(\sigma_1',\sigma_2^t)) - \sum_{a \in A(I)} \sigma_1'(I,a) \tilde{v}^t(I,\sigma^t,a)\\
&= \tilde{v}^t(I,(\sigma_1',\sigma_2^t)) - \tilde{v}^t(I,\sigma^t)
\end{align*}
\end{proof}

\begin{lem}
For any $T \in \mathbb{N}$, for any sequences of strategies $\sigma_1^1,...,\sigma_1^T$ and $\sigma_2^1,...,\sigma_2^T$, and any sequence of vectors $\lambda^1,...,\lambda^T$ each $\in \mathbb{R}^k$,
\begin{equation}
\max_{\sigma_1^* \in \Sigma_1} \sum_{t=1}^T \left[ \tilde{u}(\sigma_1^*,\sigma_2^t,\lambda^t) - \tilde{u}(\sigma_1^t,\sigma_2^t,\lambda^t) \right] \leq \max_{\sigma_1^* \in \Sigma_1} \sum_{I \in \mathcal{I}_1} \pi_1^{\sigma_1^*}(I) \max_{a \in A(I)} \sum_{t=1}^T \tilde{r}^t(I,a)
\end{equation}
\label{lemimmreg}
\end{lem}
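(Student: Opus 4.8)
The plan is to establish the bound for a single fixed deviation $\sigma_1' \in \Sigma_1$, carry that inequality through the sum over $t$, and only take the maximum over $\sigma_1^*$ at the very end; since every intermediate step holds for an arbitrary $\sigma_1'$, the outer maximum passes through both sides cleanly. The crux is a per-iteration inequality rewriting the change in modified utility $\tilde{u}(\sigma_1',\sigma_2^t,\lambda^t) - \tilde{u}(\sigma_1^t,\sigma_2^t,\lambda^t)$ as a sum of tilted counterfactual value differences over the root-most player-1 information sets, i.e. those $I \in \mathcal{I}_1$ with no other player-1 information set strictly preceding them. Writing $\mathcal{I}_1^0$ for this set, every $I' \in \mathcal{I}_1$ lies in $D(I)$ for exactly one $I \in \mathcal{I}_1^0$, and $\pi_1^{\sigma_1}(I)=1$ for $I \in \mathcal{I}_1^0$.

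First I would treat the two pieces of $\tilde{u}$ separately. For the bilinear part, the standard fact that counterfactual values at the root-most information sets sum to the utility (terminals avoiding all of $\mathcal{I}_1$ contribute a $\sigma_1$-independent constant that cancels in the difference) gives $\sum_{I \in \mathcal{I}_1^0}[v(I,(\sigma_1',\sigma_2^t)) - v(I,\sigma^t)] = u(\sigma_1',\sigma_2^t) - u(\sigma_1^t,\sigma_2^t)$. For the constraint part I would invoke convexity of each $f_i$ together with $\lambda_i^t \ge 0$: the subgradient inequality yields $-\lambda_i^t[f_i(\Psi(\sigma_1')) - f_i(\Psi(\sigma_1^t))] \le -\lambda_i^t \nabla f_i(\Psi(\sigma_1^t))^\intercal(\Psi(\sigma_1') - \Psi(\sigma_1^t))$, and this is exactly where the lone inequality of the lemma enters. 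Summing this linearized term over $i$ and over sequences $(I',a)$ produces precisely $\sum_{(I',a)} [\Psi(\sigma_1')_{(I',a)} - \Psi(\sigma_1^t)_{(I',a)}]\,c^t(I',a)$, since $c^t(I',a) = \sum_i \lambda_i^t \nabla_{(I',a)} f_i(\Psi(\sigma_1^t))$ and $\Psi(\sigma_1)_{(I',a)} = \pi_1^{\sigma_1}(I')\sigma_1(I',a)$.

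The key algebraic check is that this linearized constraint contribution coincides with the tilt terms aggregated by Lemma~\ref{lemcfv}. Summing the closed form $\tilde{v}^t(I,\sigma) = v(I,\sigma) - \sum_{I' \in D(I)}\sum_{a} \pi_1^{\sigma_1}(I,I'a)\, c^t(I',a)$ over $I \in \mathcal{I}_1^0$, and using $\pi_1^{\sigma_1}(I)=1$ so that $\pi_1^{\sigma_1}(I,I'a) = \Psi(\sigma_1)_{(I',a)}$, collapses the double tilt sum to $\sum_{(I',a)} \Psi(\sigma_1)_{(I',a)}\, c^t(I',a)$. Taking the difference between the profiles $(\sigma_1',\sigma_2^t)$ and $\sigma^t$ then shows that $\sum_{I \in \mathcal{I}_1^0}[\tilde{v}^t(I,(\sigma_1',\sigma_2^t)) - \tilde{v}^t(I,\sigma^t)]$ equals exactly the upper bound on $\tilde{u}(\sigma_1',\sigma_2^t,\lambda^t) - \tilde{u}(\sigma_1^t,\sigma_2^t,\lambda^t)$ obtained from the convexity step.

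Finally I would apply Lemma~\ref{lemregsum} at each $I \in \mathcal{I}_1^0$: because $\pi_1^{\sigma_1'}(I)=1$ it gives $\tilde{v}^t(I,(\sigma_1',\sigma_2^t)) - \tilde{v}^t(I,\sigma^t) = \sum_{I' \in D(I)} \pi_1^{\sigma_1'}(I')\, \tilde{r}^t(I',\sigma_1')$, and summing over $\mathcal{I}_1^0$ (which partitions $\mathcal{I}_1$ through the sets $D(I)$) yields $\sum_{I \in \mathcal{I}_1} \pi_1^{\sigma_1'}(I)\,\tilde{r}^t(I,\sigma_1')$. Summing over $t$, bounding $\sum_t \tilde{r}^t(I,\sigma_1') = \sum_a \sigma_1'(I,a)\sum_t \tilde{r}^t(I,a) \le \max_a \sum_t \tilde{r}^t(I,a)$ since $\sigma_1'(I,\cdot)$ is a distribution, and then maximizing over $\sigma_1'$ delivers the stated inequality. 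I expect the main obstacle to be the bookkeeping of the second and third paragraphs, namely verifying that the convexity-linearized constraint contribution matches the $c^t$ tilt terms aggregated by Lemma~\ref{lemcfv} over $\mathcal{I}_1^0$ and confirming the cancellation of the $\sigma_1$-independent terminal contributions; the convexity step itself is short but is the single place where the inequality, rather than an equality, is introduced.
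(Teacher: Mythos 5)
Your proposal is correct and takes essentially the same route as the paper's proof: the single inequality comes from the same subgradient linearization of the constraint terms (with $\lambda^t_i \ge 0$), the linearized Lagrangian gap is identified with differences of tilted counterfactual values via the closed form of Lemma~\ref{lemcfv}, Lemma~\ref{lemregsum} converts those differences into weighted sums of tilted regrets, and the mixture-over-actions is bounded by the max before taking the outer maximum over $\sigma_1^*$. The only difference is a bookkeeping device: the paper grafts a dummy singleton root information set $\emptyset$ onto the game and applies Lemma~\ref{lemregsum} once at the root, whereas you partition $\mathcal{I}_1$ into the descendant sets $D(I)$ of the root-most information sets $\mathcal{I}_1^0$ and apply it at each of those, which is the same argument in different clothing.
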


\begin{proof}
We can modify the original game to add a dummy history $\emptyset$ before any other actions where $P(\emptyset) = 1$, $|A(\emptyset)| = 1$, and $\emptyset$ forms a singleton information set. This doesn't change the values of the game in any way. Then for any strategy $\sigma_1$ and any $t \in \{1,...,T\}$, we have that
\begin{equation}
\tilde{v}^t(\emptyset,\sigma) = u(\sigma_1,\sigma_2^t) - \sum_{I \in \mathcal{I}_1} \sum_{a \in A(I)} \pi_1^{\sigma_1}(Ia) c^t(I,a)
\end{equation}

Then, using the convexity of each $f_i$, we can show for any $\sigma_1' \in \Sigma_1$ that:
\begin{align*}
\tilde{u}(\sigma_1',\sigma_2^t,\lambda^t) - \tilde{u}(\sigma_1^t,\sigma_2^t,\lambda^t) &= u(\sigma_1',\sigma_2^t) - \sum_{i=1}^k \lambda^t_i f_i(\Psi(\sigma_1')) - u(\sigma_1^t,\sigma_2^t) + \sum_{i=1}^k \lambda^t_i f_i(\Psi(\sigma_1^t))\\
&= u(\sigma_1',\sigma_2^t) - u(\sigma_1^t,\sigma_2^t) + \sum_{i=1}^k \lambda^t_i \left( f_i(\Psi(\sigma_1^t)) - f_i(\Psi(\sigma_1'))\right)\\
&\leq u(\sigma_1',\sigma_2^t) - u(\sigma_1^t,\sigma_2^t) + \sum_{i=1}^k \lambda^t_i \left( \langle \nabla f_i(\Psi(\sigma_1^t)), \Psi(\sigma_1^t) - \Psi(\sigma_1') \rangle \right)\\
&= u(\sigma_1',\sigma_2^t) - \sum_{I \in \mathcal{I}_1}\sum_{a \in A(I)} \pi_1^{\sigma_1'}(Ia) c^t(I,a) - \left(u(\sigma_1^t,\sigma_2^t) - \sum_{I \in \mathcal{I}_1}\sum_{a \in A(I)} \pi_1^{\sigma_1^t}(Ia) c^t(I,a)\right)\\
&= \tilde{v}^t(\emptyset,(\sigma_1',\sigma_2^t)) - \tilde{v}^t(\emptyset,(\sigma_1^t,\sigma_2^t))
\end{align*}

We then apply Lemma~\ref{lemregsum} to give us
\begin{align*}
\sum_{t=1}^T\left[\tilde{u}(\sigma_1',\sigma_2^t,\lambda^t) - \tilde{u}(\sigma_1^t,\sigma_2^t,\lambda^t)\right] &= \sum_{t=1}^T\left[\tilde{v}^t(\emptyset,(\sigma_1',\sigma_2^t)) - \tilde{v}^t(\emptyset,(\sigma_1^t,\sigma_2^t))\right]\\
&= \sum_{t=1}^T \sum_{I \in \mathcal{I}_1} \pi_1^{\sigma_1'}(I) \tilde{r}^t(I,\sigma_1')\\
&= \sum_{t=1}^T \sum_{I \in \mathcal{I}_1} \pi_1^{\sigma_1'}(I) \sum_{a \in A(I)} \sigma_1'(I,a) \tilde{r}^t(I,a)\\
&= \sum_{I \in \mathcal{I}_1} \pi_1^{\sigma_1'}(I) \sum_{a \in A(I)} \sigma_1'(I,a) \sum_{t=1}^T \tilde{r}^t(I,a)\\
&\leq \sum_{I \in \mathcal{I}_1} \pi_1^{\sigma_1'}(I) \max_{a \in A(I)} \sum_{t=1}^T \tilde{r}^t(I,a)
\end{align*}
The last statement follows because $\sum_{a \in A(I)} \sigma_1'(I,a) = 1$.

This holds for all $\sigma_1' \in \Sigma_1$, so the lemma must hold. 
\end{proof}

\begin{thm}
For any $T \in \mathbb{N}$, any sequence of player 2 strategies $\sigma_2^1,...,\sigma_2^T$, and any sequence of vectors $\lambda^1,...,\lambda^T$ each $\in [0,\beta]^k$, if CCFR is used to select the sequence of strategies $\sigma_1^1,...,\sigma_1^T$, then
\begin{equation}
\max_{\sigma_1^* \in \Sigma_1} \frac{1}{T} \sum_{t=1}^T \left[ \tilde{u}(\sigma_1^*,\sigma_2^t,\lambda^t) - \tilde{u}(\sigma_1^t,\sigma_2^t,\lambda^t) \right] \leq \frac{\left(\Delta_u + 2k\beta F\right)M \sqrt{|A_1|}}{\sqrt{T}}
\end{equation}
where $\Delta_u = \max_z u(z) - \min_z u(z)$ is the range of possible utilities, $k$ is the number of constraints, $F=\max_{\vect{x},i}||\nabla f_i(x)||_1$ is a bound on the subgradients, and $M$ is a game-specific constant.
\label{thmreg}
\end{thm}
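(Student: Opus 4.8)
The plan is to recognize Theorem~\ref{thmreg} as the ordinary counterfactual-regret bound of CFR, but applied to the tilted utility $\tilde{u}$ and therefore with the utility range $\Delta_u$ inflated to absorb the constraint tilts. Lemma~\ref{lemimmreg} already performs the first half of the standard CFR argument: it bounds the total tilted regret $\max_{\sigma_1^*}\sum_t[\tilde{u}(\sigma_1^*,\sigma_2^t,\lambda^t)-\tilde{u}(\sigma_1^t,\sigma_2^t,\lambda^t)]$ by the reach-weighted sum of the per-information-set immediate regrets $\max_{a\in A(I)}\sum_{t=1}^T \tilde{r}^t(I,a)$. So the remaining work is to (i) bound each per-information-set immediate regret via the regret-matching guarantee that CCFR invokes on line~\ref{line:rm}, and (ii) reassemble these bounds across the tree, which is exactly the step that produces the factor $M\sqrt{|A_1|}$ in the unconstrained analysis.

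For step (i), since CCFR runs regret matching at each $I$ with the tilted values $\tilde{v}^t(I,a)$ as inputs, the standard regret-matching bound \cite{Hart00,Zinkevich07} gives $\max_{a}\sum_{t=1}^T \tilde{r}^t(I,a)\le L_I\sqrt{|A(I)|\,T}$, where $L_I$ is a uniform bound on the per-round range $\max_a \tilde{v}^t(I,a)-\min_a \tilde{v}^t(I,a)$. The crux is to show $L_I\le \Delta_u+2k\beta F$ for every $I$. I would obtain this from the closed form of Lemma~\ref{lemcfv}, in which $\tilde{v}^t(I,\sigma^t,a)$ equals the ordinary counterfactual value $v(I,\sigma^t,a)$ minus $c^t(I,a)$ and minus a reach-weighted sum of downstream tilts $c^t(I',a')$. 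The ordinary values contribute range at most $\Delta_u$, since their opponent-reach weights sum to at most $1$. For the tilt contribution, the key observation is that the current sequence $(I,a)$ and all downstream sequences $(I',a')$ are \emph{distinct} coordinates of each subgradient vector, every reach weight $\pi_1^{\sigma_1^t}(Ia,I'a')$ is at most $1$, and $\sum_i \lambda_i^t\le k\beta$; hence the total tilt attached to any single action is bounded in absolute value by $\sum_i \lambda_i^t\,\|\nabla f_i(\Psi(\sigma_1^t))\|_1\le k\beta F$. Differencing the max-action and min-action tilts then gives a tilt range of at most $2k\beta F$.

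For step (ii), I would substitute $L_I\le \Delta_u+2k\beta F$ and $|A(I)|\le |A_1|$ into Lemma~\ref{lemimmreg}, factor out $(\Delta_u+2k\beta F)\sqrt{|A_1|\,T}$, and treat the remaining reach-weighted sum $\max_{\sigma_1^*}\sum_{I\in\mathcal{I}_1}\pi_1^{\sigma_1^*}(I)(\cdots)$ exactly as in the refined CFR bound of \cite{Zinkevich07,Burch17}, where this structural quantity is absorbed into the game-specific constant $M$; dividing by $T$ yields the stated $\mathcal{O}(1/\sqrt{T})$ rate. The main obstacle is establishing the uniform range bound $L_I\le \Delta_u+2k\beta F$: one must verify that the downstream tilts accumulated through the recurrence~(\ref{tcfva-recur}) do not compound. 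This relies on perfect recall, which guarantees that the coordinates $(I',a')$ appearing in the reach-weighted sum are distinct entries of $\nabla f_i$ whose combined $\ell_1$ mass is capped by $F$, together with the fact that each reach weight is at most one. Once that bound is in hand, the reassembly in step (ii) is routine and mirrors the unconstrained CFR proof.
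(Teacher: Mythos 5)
Your proposal is correct and follows essentially the same route as the paper's proof: apply the regret-matching bound at each information set to the tilted values, bound the per-round value range by $\Delta_u + 2k\beta F$ via the closed form of Lemma~\ref{lemcfv} (with the current and downstream tilts being distinct coordinates of each subgradient, so their combined magnitude is capped by $\sum_i \lambda_i^t \|\nabla f_i\|_1 \le k\beta F$, doubled when differencing two actions), and then reassemble through Lemma~\ref{lemimmreg}, absorbing the reach-weighted sum $\max_{\sigma_1^*}\sum_I \pi_1^{\sigma_1^*}(I)$ into the constant $M$. This matches the paper's argument step for step, differing only in the order of presentation.
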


\begin{proof}
For any information set $I \in \mathcal{I}_1$, $\sigma_1^t(I,a)$ is selected for each $a \in A(I)$ using regret matching with action values $\tilde{v}^t(I,a)$. The bound for regret matching then guarantees
\begin{equation*}
\max_{a \in A(I)} \frac{1}{T}\sum_{t=1}^T \tilde{r}^t(I,a) \leq \frac{\sqrt{|A(I)|}}{\sqrt{T}} \max_{\substack{t \in \{1,...,T\}\\a,a' \in A(I)}} (\tilde{v}^t(I,\sigma^t,a) - \tilde{v}^t(I,\sigma^t,a'))
\end{equation*}
Using Lemma~\ref{lemcfv}, we bound the max term as
\begin{align*}
\max_{\substack{t \in \{1,...,T\}\\a,a' \in A(I)}} &(\tilde{v}^t(I,\sigma^t,a) - \tilde{v}^t(I,\sigma^t,a'))\\
 &= \max_{\substack{t \in \{1,...,T\}\\a,a' \in A(I)}} \left( v(I,\sigma^t,a) - v(I,\sigma^t,a') - c^t(I,a) - \sum_{\substack{I' \in D(I)\\Ia \sqsubseteq I'}} \sum_{a' \in A(I')} \pi_1^{\sigma_1^t}(Ia,I'a') c^t(I',a')\right.\\
 &\qquad\left. + c^t(I,a') - \sum_{\substack{I' \in D(I)\\Ia' \sqsubseteq I'}} \sum_{a'' \in A(I')} \pi_1^{\sigma_1^t}(Ia',I'a'') c^t(I',a'')\right)\\
 &\leq \max_{\substack{\sigma\\a,a' \in A(I)}} \left( v(I,\sigma,a) - v(I,\sigma,a')\right)\\
 &\qquad + \max_{\substack{\sigma\\a,a' \in A(I)}} \left(|c^t(I,a)| + \sum_{\substack{I' \in D(I)\\Ia \sqsubseteq I'}} \sum_{a' \in A(I')}|c^t(I,a')| + |c^t(I,a')| + \sum_{\substack{I' \in D(I)\\Ia' \sqsubseteq I'}} \sum_{a'' \in A(I')}|c^t(I,a'')|\right)\\
 &\leq \max_{z,z' \in Z} \left(u(z) - u(z')\right) + 2\sum_{I \in \mathcal{I}_1}\sum_{a \in A(I)} |c^t(I,a)|\\
 &\leq \Delta_u + 2\sum_{i=1}^k \lambda^t_i ||\nabla f(\Psi(\sigma_1^t)) ||_1\\
 &\leq \Delta_u + 2k\beta F
\end{align*}
We can now use this bound with Lemma~\ref{lemimmreg} to give
\begin{align*}
\max_{\sigma_1^* \in \Sigma_1} \frac{1}{T} \sum_{t=1}^T \left[ \tilde{u}(\sigma_1^*,\sigma_2^t,\lambda^t) - \tilde{u}(\sigma_1^t,\sigma_2^t,\lambda^t) \right] &\leq \frac{\sqrt{|A(I)|}}{\sqrt{T}} \left(\Delta_u + 2k\beta F\right)\max_{\sigma_1^* \in \Sigma_1} \sum_{I \in \mathcal{I}_1} \pi_1^{\sigma_1^*}(I)\\
&\leq \frac{\left(\Delta_u + 2k\beta F\right)M \sqrt{|A_1|}}{\sqrt{T}}
\end{align*}
\end{proof}

Define $R^T_{\lambda}(\beta)$ to the regret of player 2 for choosing the sequence $\lambda^1,...,\lambda^T$ against the set of possible $\lambda^* \in [0,\beta]^k$:
\begin{align*}
R^T_{\lambda}(\beta) &= \max_{\lambda^* \in [0,\beta]^k} \sum_{t=1}^T \left[ - \tilde{u}(\sigma_1^t,\sigma_2^t,\lambda^*) + \tilde{u}(\sigma_1^t,\sigma_2^t,\lambda^t) \right]
\end{align*}
$R^T_{\lambda}(\beta)$ can be minimized by using, e.g., projected gradient descent to choose $\lambda^t$ for each $t$.

Define $f^*=\min_\vect{x} \sum_{i=1}^k (f_i(\vect{x}))^+$ to be the minimum sum of constraint violations (if the constraints are feasible, $f^*=0$), and define $\mathcal{X}_{f^*} = \{ \vect{x} \in \mathcal{X} \mid \sum_{i=1}^k (f_i(\vect{x}))^+ = f^* \}$ to be the set of strategies that achieve $f^*$. Note that these are well-defined and $\mathcal{X}_{f^*}$ is non-empty, since each $f_i$ is continuous and $\mathcal{X}$ is compact. Define $\Sigma_1^{f^*} = \{ \sigma_1 \in \Sigma_1 \mid \Psi(\sigma_1) \in \mathcal{X}_{f^*} \}$. $\Sigma_1^{f^*}$ is non-empty as $\Psi$ is surjective.

\begin{thm}
For any $T \in \mathbb{N}$, any series of vectors $\lambda^1,...,\lambda^T$ each $\in [0,\beta]^k$, and any $f_1,...,f_k$ which are convex, continuous constraints, if CCFR is used to select the sequence of strategies $\sigma_1^1,...,\sigma_1^T$ and CFR is used to select the sequence of strategies $\sigma_2^1,...,\sigma_2^T$, then the following hold:
\begin{align}
\sum_{i=1}^k (f_i(\Psi(\overline{\sigma}_1^t)))^+ - f^* &\leq \frac{R_\lambda^T(\beta)}{\beta} + \frac{\left(\Delta_u + 2k\beta F\right)M \sqrt{|A_1|}}{\beta\sqrt{T}} + \frac{\Delta_u}{\beta}\label{eqcon}\\
\max_{\sigma_1^* \in \Sigma_1^{f^*}} u(\sigma_1^*,\overline{\sigma}_2^T) - \min_{\sigma_2^* \in \Sigma_2} u(\overline{\sigma}_1^T,\sigma_2^*) &\leq \frac{4\left(\Delta_u + k\beta F\right)M \sqrt{|A_1|}}{\sqrt{T}} + 2R_\lambda^T(\beta)\label{eqexp}
\end{align}
\label{thm:infcon}
\end{thm}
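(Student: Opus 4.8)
The plan is to combine three regret guarantees and convert all time-averages into statements about the average strategies. The three guarantees are: Theorem~\ref{thmreg} for the constrained player (CCFR, run on the tilted utility $\tilde u$), the standard CFR regret bound for the unconstrained player (which applies to $u$ unchanged, since the $\lambda$ terms depend only on $\sigma_1$), and the $\lambda$-regret $R_\lambda^T(\beta)$. Call the first two average rates $\epsilon_1 = (\Delta_u + 2k\beta F)M\sqrt{|A_1|}/\sqrt{T}$ and $\epsilon_2$. The translation rests on two facts: $u$ is bilinear in sequence form, so $\tfrac1T\sum_t u(\sigma_1^*,\sigma_2^t) = u(\overline{\sigma}_1^T,\ldots)$ pattern gives $\tfrac1T\sum_t u(\sigma_1^*,\sigma_2^t) = u(\sigma_1^*,\overline{\sigma}_2^T)$ and $\tfrac1T\sum_t u(\sigma_1^t,\sigma_2^*) = u(\overline{\sigma}_1^T,\sigma_2^*)$; and each $f_i$ is convex with $\Psi(\overline{\sigma}_1^T) = \tfrac1T\sum_t \Psi(\sigma_1^t)$, so writing $\overline{f}_i = \tfrac1T\sum_t f_i(\Psi(\sigma_1^t))$ we have $(f_i(\Psi(\overline{\sigma}_1^T)))^+ \le (\overline{f}_i)^+$. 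I also abbreviate $\overline{\lambda}^T = \tfrac1T\sum_t\lambda^t$ and let $P = \tfrac1T\sum_t\sum_i \lambda_i^t f_i(\Psi(\sigma_1^t))$ be the realized average penalty.

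For the constraint bound (\ref{eqcon}) I would first apply Theorem~\ref{thmreg} with a comparator $\sigma_1^\dagger \in \Sigma_1^{f^*}$. After expanding $\tilde u$, bounding the utility difference crudely by $\tfrac1T\sum_t u(\sigma_1^t,\sigma_2^t) - u(\sigma_1^\dagger,\overline{\sigma}_2^T) \le \Delta_u$, and using $\sum_i \overline{\lambda}_i^T f_i(\Psi(\sigma_1^\dagger)) \le \beta\sum_i (f_i(\Psi(\sigma_1^\dagger)))^+ = \beta f^*$ (valid since $\overline{\lambda}^T \in [0,\beta]^k$ and $\sigma_1^\dagger$ attains the minimal total violation $f^*$), this yields an upper bound $P \le \Delta_u + \beta f^* + \epsilon_1$. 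I then lower-bound the same $P$ via the $\lambda$-regret against the fixed $\lambda^*$ that places weight $\beta$ on exactly the constraints with $\overline{f}_i > 0$: this gives $P \ge \beta\sum_i (\overline{f}_i)^+ - R_\lambda^T(\beta) \ge \beta\sum_i (f_i(\Psi(\overline{\sigma}_1^T)))^+ - R_\lambda^T(\beta)$. Chaining the two estimates on $P$ and dividing by $\beta$ produces (\ref{eqcon}).

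For the exploitability bound (\ref{eqexp}) I would sandwich the realized average utility $\overline{u} = \tfrac1T\sum_t u(\sigma_1^t,\sigma_2^t)$. Theorem~\ref{thmreg} with the constrained maximizer $\sigma_1^\dagger$ attaining $\max_{\sigma_1^*\in\Sigma_1^{f^*}} u(\sigma_1^*,\overline{\sigma}_2^T)$ gives $\overline{u} \ge u(\sigma_1^\dagger,\overline{\sigma}_2^T) - \sum_i \overline{\lambda}_i^T f_i(\Psi(\sigma_1^\dagger)) - \epsilon_1 + P$, while the CFR guarantee for player~2 gives $\overline{u} \le \min_{\sigma_2^*} u(\overline{\sigma}_1^T,\sigma_2^*) + \epsilon_2$. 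Subtracting, the exploitability gap is at most $\epsilon_1 + \epsilon_2$ plus the penalty discrepancy $\sum_i \overline{\lambda}_i^T f_i(\Psi(\sigma_1^\dagger)) - P$. This discrepancy is controlled exactly as above: the first term is at most $\beta f^*$, and, using $f^* \le \sum_i (f_i(\Psi(\overline{\sigma}_1^T)))^+ \le \sum_i (\overline{f}_i)^+$ together with the $\lambda$-regret, the realized penalty obeys $-P \le R_\lambda^T(\beta) - \beta f^*$, so the two $\beta f^*$ contributions cancel and the discrepancy collapses to $R_\lambda^T(\beta)$. Collecting $\epsilon_1 + \epsilon_2$ and bounding player~2's action count and game constant by the global $M, |A_1|$ gives a rate no worse than the right-hand side of (\ref{eqexp}).

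The bilinearity and convexity bookkeeping and the regret-rate arithmetic are routine. The one delicate point, and where I would be most careful, is the possibly-infeasible regime $f^* > 0$: there the comparator $\sigma_1^\dagger \in \Sigma_1^{f^*}$ is itself penalized by the Lagrange terms, and a naive bound leaves a spurious $\beta f^*$ that would diverge as $\beta \to \infty$. The observation that resolves this is that $f^*$ is by definition the \emph{smallest} attainable total violation, so every iterate---and hence, by convexity, the average $\overline{\sigma}_1^T$---violates by at least $f^*$; feeding this lower bound into the $\lambda$-regret shows the $\lambda$ player necessarily extracts at least as much penalty as the comparator incurs, making the two $\beta f^*$ terms cancel.
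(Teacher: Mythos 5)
Your proposal is correct, and for the constraint bound (\ref{eqcon}) it is essentially the paper's own argument: upper-bound the realized penalty $P=\tfrac1T\sum_t\sum_i\lambda_i^t f_i(\Psi(\sigma_1^t))$ via the CCFR regret theorem against a comparator in $\Sigma_1^{f^*}$ (costing $\Delta_u+\beta f^*+\epsilon_1$), lower-bound it via the $\lambda$-regret against the comparator that puts $\beta$ on violated constraints, apply Jensen, and divide by $\beta$. For the exploitability bound (\ref{eqexp}), however, you take a genuinely different and in fact tighter route. The paper proves two one-sided inequalities, $\max_{\sigma_1^*\in\Sigma_1^{f^*}}u(\sigma_1^*,\overline{\sigma}_2^T)-u(\overline{\sigma}_1^T,\overline{\sigma}_2^T)\le \tilde{R}_1^T+R_2^T+R_\lambda^T(\beta)$ and $u(\overline{\sigma}_1^T,\overline{\sigma}_2^T)-\min_{\sigma_2^*}u(\overline{\sigma}_1^T,\sigma_2^*)\le \tilde{R}_1^T+R_2^T+R_\lambda^T(\beta)$, and adds them, which is where the factors $4$ and $2$ in (\ref{eqexp}) come from; in each half the spurious $\beta f^*$ is cancelled by a $-\beta\sum_i(f_i(\Psi(\overline{\sigma}_1^T)))^+\le-\beta f^*$ term, exactly the cancellation you identify. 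You instead sandwich the realized average utility $\overline{u}=\tfrac1T\sum_t u(\sigma_1^t,\sigma_2^t)$ directly between the CCFR guarantee (from below, against $\sigma_1^\dagger$) and the CFR guarantee for player 2 (from above), never routing through $u(\overline{\sigma}_1^T,\overline{\sigma}_2^T)$, so each regret is paid once rather than twice; this yields the bound $\tilde{R}_1^T+R_2^T+R_\lambda^T(\beta)\le 2(\Delta_u+k\beta F)M\sqrt{|A|}/\sqrt{T}+R_\lambda^T(\beta)$, a factor-of-two improvement over (\ref{eqexp}). One small caveat for full rigor: your final claim that this is ``no worse than'' the stated right-hand side requires your bound to be nonnegative, i.e.\ $R_\lambda^T(\beta)\ge -2(\Delta_u+k\beta F)M\sqrt{|A|}/\sqrt{T}$; since the $\lambda$-player's regret against the best fixed $\lambda^*$ can in principle be negative (e.g.\ if $\lambda^t$ tracks sign-alternating constraint values), your inequality does not literally imply the paper's in that degenerate regime, even though it is the more informative statement in the regime that matters. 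The clean fix is simply to state your tighter inequality as the conclusion, or to note that when the derived bound is negative the comparison is vacuous for the intended $\mathcal{O}(1/\sqrt{T})$ convergence claim.
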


\begin{proof}
First, we show (\ref{eqcon}).

Let $\sigma_1^f$ be any strategy in $\Sigma_1^{f^*}$. Then for any $t$
\begin{align*}
\tilde{u}(\sigma_1^f,\sigma_2^t,\lambda^t) - \tilde{u}(\sigma_1^t,\sigma_2^t,\lambda^t) &= u(\sigma_1^f,\sigma_2^t) - \sum_{i=1}^k \lambda_i^t f(\Psi(\sigma_1^f)) - u(\sigma_1^t,\sigma_2^t) + \sum_{i=1}^k \lambda_i^t f(\Psi(\sigma_1^t))\\
&\geq \sum_{i=1}^k \lambda_i^t f(\Psi(\sigma_1^t)) - \sum_{i=1}^k \beta (f(\Psi(\sigma_1^f)))^+ - (u(\sigma_1^t,\sigma_2^t) - u(\sigma_1^f,\sigma_2^t))\\
&\geq  \sum_{i=1}^k \lambda_i^t f(\Psi(\sigma_1^t)) - \beta f^* - \Delta_u
\end{align*}
Applying Theorem~\ref{thmreg} then gives us
\begin{equation}
\frac{1}{T}\sum_{t=1}^T \sum_{i=1}^k \lambda_i^t f(\Psi(\sigma_1^t)) \leq \frac{\left(\Delta_u + 2k\beta F\right)M \sqrt{|A_1|}}{\sqrt{T}} + \beta f^* + \Delta_u\label{eq}
\end{equation}
Consider now $R^T_\lambda(\beta)$. We have
\begin{align*}
R^T_{\lambda}(\beta) &= \max_{\lambda^* \in [0,\beta]^k} \frac{1}{T} \sum_{t=1}^T \left[ - \tilde{u}(\sigma_1^t,\sigma_2^t,\lambda^*) + \tilde{u}(\sigma_1^t,\sigma_2^t,\lambda^t) \right]\\
&= \max_{\lambda^* \in [0,\beta]^k} \frac{1}{T} \sum_{t=1}^T \sum_{i=1}^k \lambda_i^* f(\Psi(\sigma_1^t)) - \frac{1}{T} \sum_{t=1}^T \sum_{i=1}^k \lambda_i^t f(\Psi(\sigma_1^t))\\
&\geq \max_{\lambda^* \in [0,\beta]} \sum_{i=1}^k \lambda_i^* f(\Psi(\overline{\sigma}_1^T)) - \frac{1}{T} \sum_{t=1}^T\sum_{i=1}^k \lambda_i^t f(\Psi(\sigma_1^t))
\end{align*}
where the last step uses Jensen's inequality. Combining this with (\ref{eq}) gives us
\begin{equation}
\max_{\lambda^* \in [0,\beta]} \sum_{i=1}^k \lambda_i^* f(\Psi(\overline{\sigma}_1^T)) \leq R_\lambda^T(\beta) + \frac{\left(\Delta_u + 2k\beta F\right)M \sqrt{|A_1|}}{\sqrt{T}} + \beta f^* + \Delta_u
\end{equation}
Define $\lambda^{\overline{\sigma}_1^T}$ such that, for each index $j \in \{1,...,k\}$, we have $\lambda^{\overline{\sigma}_1^T}_j = \beta$ if $f_j(\overline{\sigma}_1^T) > 0$ and $\lambda^{\overline{\sigma}_1^T}_j = 0$ otherwise. Then we have
\begin{align*}
\beta \sum_{i=1}^k (f_i(\Psi(\overline{\sigma}_1^T)))^+ &= \sum_{i=1}^k \lambda^{\overline{\sigma}^T_1}_i f_i(\Psi(\overline{\sigma}_1^T))\\
&\leq \max_{\lambda^* \in [0,\beta]^k} \sum_{i=1}^k \lambda_i^* f(\Psi(\overline{\sigma}_1^T))\\
&\leq \frac{\left(\Delta_u + 2k\beta F\right)M \sqrt{|A_1|}}{\sqrt{T}} + \beta f^* + \Delta_u + R_\lambda^T(\beta)\\
\sum_{i=1}^k (f_i(\Psi(\overline{\sigma}^T_1)))^+ - f^* &\leq \frac{R_\lambda^T(\beta)}{\beta} + \frac{\left(\Delta_u + 2k\beta F\right)M \sqrt{|A_1|}}{\beta\sqrt{T}} + \frac{\Delta_u}{\beta} 
\end{align*}

Next, we show (\ref{eqexp}). We break this into parts, showing that each player's average strategy does almost as well as a best response against the other player's actual strategy.

Consider first how well player 1 could do be deviating to any other strategy $\sigma_1' \in \Sigma_1^{f^*}$:
\begin{align*}
u(\sigma_1',\overline{\sigma}_2^T) &= \frac{1}{T} \sum_{t=1}^T u(\sigma_1',\sigma_2^t)\\
&= \frac{1}{T} \sum_{t=1}^T \tilde{u}(\sigma_1',\sigma_2^t,\lambda^t) + \sum_{i=1}^k \lambda^t_i f_i(\Psi(\sigma_1'))\\
&\leq \frac{1}{T} \max_{\sigma_1^* \in \Sigma_1} \sum_{t=1}^T \tilde{u}(\sigma_1',\sigma_2^t,\lambda^t) + \beta f^*\\
&= \frac{1}{T} \sum_{t=1}^T \tilde{u}(\sigma_1^t,\sigma_2^t,\lambda^t) + \tilde{R}_1^T + \beta f^*\\
&= \frac{1}{T} \sum_{t=1}^T u(\sigma_1^t,\sigma_2^t) - \frac{1}{T} \sum_{t=1}^T \sum_{i=1}^k \lambda_i^t f_i(\Psi(\sigma_1^t)) + \tilde{R}_1^T + \beta f^*\\
&= \frac{1}{T} \min_{\sigma_2^*} \sum_{t=1}^T u(\sigma_1^t,\sigma_2^*) + R_2^T + R_\lambda^T(\beta) - \frac{1}{T} \max_{\lambda^*}\sum_{t=1}^T \sum_{i=1}^k \lambda_i^* f_i(\Psi(\sigma_1^t)) + \tilde{R}_1^T + \beta f^*\\
&= \frac{1}{T} \min_{\sigma_2^*} \sum_{t=1}^T u(\sigma_1^t,\sigma_2^*) - \max_{\lambda^*}\sum_{i=1}^k \lambda_i^* \left(\frac{1}{T}\sum_{t=1}^T f_i(\Psi(\sigma_1^t))\right) + \tilde{R}_1^T + R_2^T + R_\lambda^T(\beta)\\
&\leq u(\overline{\sigma}_1^T,\overline{\sigma}_2^T) - \sum_{i=1}^k \beta \left(\frac{1}{T}\sum_{t=1}^T f_i(\Psi(\sigma_1^t))\right)^+ + \tilde{R}_1^T + R_2^T + R_\lambda^T(\beta) + \beta f^*
\intertext{Using Jensen's inequality:}
&\leq u(\overline{\sigma}_1^T,\overline{\sigma}_2^T) - \beta \sum_{i=1}^k  \left(f_i(\Psi(\overline{\sigma}_1^T))\right)^+ + \tilde{R}_1^T + R_2^T + R_\lambda^T(\beta) + \beta f^*\\
&\leq u(\overline{\sigma}_1^T,\overline{\sigma}_2^T) - \beta \min_{\vect{x} \in \mathcal{X}} \sum_{i=1}^k  \left(f_i(\vect{x})\right)^+ + \tilde{R}_1^T + R_2^T + R_\lambda^T(\beta) + \beta f^*\\
&= u(\overline{\sigma}_1^T,\overline{\sigma}_2^T) + \tilde{R}_1^T + R_2^T + R_\lambda^T(\beta)
\end{align*}
This is true for each $\sigma' \in \Sigma_1^{f^*}$, so we get
\begin{equation}
\max_{\sigma_1^* \in \Sigma_1^{f^*}} u(\sigma_1^*,\overline{\sigma}_2^T) - u(\overline{\sigma}_1^T,\overline{\sigma}_2^T) \leq \tilde{R}_1^T + R_2^T + R_\lambda^T(\beta) \label{eqp1}
\end{equation}

Next we look at how well player 2 could do by deviating to any other strategy $\sigma_2'$.
\begin{align*}
{-u}(\overline{\sigma}_1^T,\sigma_2') &= {-\frac{1}{T}}\sum_{t=1}^T u(\sigma_1^t,\sigma_2')\\
&\leq {-\frac{1}{T}} \min_{\sigma_2^*} \sum_{t=1}^T u(\sigma_1^t,\sigma_2^*)\\
&= {-\frac{1}{T}} \sum_{t=1}^T u(\sigma_1^t,\sigma_2^t) + R_2^T\\
&= {-\frac{1}{T}} \sum_{t=1}^T \tilde{u}(\sigma_1^t,\sigma_2^t,\lambda^t) - \frac{1}{T}\sum_{t=1}^T \sum_{i=1}^k \lambda_i^t f_i(\Psi(\sigma_1^t)) + R_2^T\\
&\leq \tilde{R}_1^T - \frac{1}{T} \max_{\sigma_1^*}\sum_{t=1}^T \tilde{u}(\sigma_1^*,\sigma_2^t,\lambda^t) - \frac{1}{T}\sum_{t=1}^T \sum_{i=1}^k \lambda_i^t f_i(\Psi(\sigma_1^t)) + R_2^T\\
&\leq  {-\frac{1}{T}} \sum_{t=1}^T \tilde{u}(\overline{\sigma}_1^T,\sigma_2^t,\lambda^t) + R_\lambda^T(\beta) - \max_{\lambda^* \in [0,\beta]^k}\frac{1}{T}\sum_{t=1}^T \sum_{i=1}^k \lambda_i^* f_i(\Psi(\sigma_1^t)) + \tilde{R}_1^T + R_2^T\\
&= {-u}(\overline{\sigma}_1^T,\overline{\sigma}_2^T) + \frac{1}{T}\sum_{t=1}^T \sum_{i=1}^k \lambda_i^t f_i(\Psi(\overline{\sigma}_1^T)) - \max_{\lambda^* \in [0,\beta]^k}\frac{1}{T}\sum_{t=1}^T \sum_{i=1}^k \lambda_i^* f_i(\Psi(\sigma_1^t)) + \tilde{R}_1^T + R_2^T + R_\lambda^T\\
&\leq {-u}(\overline{\sigma}_1^T,\overline{\sigma}_2^T) + \frac{1}{T}\sum_{t=1}^T \sum_{i=1}^k \lambda_i^t f_i(\Psi(\overline{\sigma}_1^T)) - \frac{1}{T}\sum_{t=1}^T \sum_{i=1}^k \overline{\lambda}_i^T f_i(\Psi(\sigma_1^t)) + \tilde{R}_1^T + R_2^T + R_\lambda^T\\
&= {-u}(\overline{\sigma}_1^T,\overline{\sigma}_2^T) + \sum_{i=1}^k \overline{\lambda}_i^T \left(f_i(\Psi(\overline{\sigma}_1^T)) - \frac{1}{T} \sum_{t=1}^T f_i(\Psi(\sigma_1^t))\right) + \tilde{R}_1^T + R_2^T + R_\lambda^T\\
&\leq {-u}(\overline{\sigma}_1^T,\overline{\sigma}_2^T) + \tilde{R}_1^T + R_2^T + R_\lambda^T
\end{align*}
In the last step we use Jensen's inequality, along with the fact that $\overline{\lambda}_i^T \geq 0$ for all $i$.

Again, this holds for all $\sigma_2'$, so necessarily
\begin{equation}
{-\min_{\sigma_2^* \in \Sigma_2}} u(\overline{\sigma}_1^T,\sigma_2^*) + u(\overline{\sigma}_1^T,\overline{\sigma}_2^T) \leq \tilde{R}_1^T + R_2^T + R_\lambda^T(\beta)\label{eqp2}
\end{equation}

Adding together (\ref{eqp1}) and (\ref{eqp2}), then substituting bounds on $\tilde{R}_1^T$ (from Theorem~\ref{eqp1}) and $R_2^T$ (from CFR bound) gives the result.

\end{proof}

Theorem~1 follows from the second part of Theorem~\ref{thm:infcon} when the constraints are simultaneously feasible, in which case $\sigma_1^{f^*}$ becomes the feasible set. Theorem~2 follows from the first part of Theorem~\ref{thm:infcon} when the constraints are simultaneously feasible, noting that $f_i(\Psi(\overline{\sigma}_1^t)) \leq \sum_{i=1}^k (f_i(\Psi(\overline{\sigma}_1^t)))^+$ for all $i$.

We now prove Theorem~3, which we restate here for clarity.
\addtocounter{thm}{-3}
\begin{thm}
Assume that $f_1,...,f_k$ satisfy a constraint qualification such as Slater's condition, and define $\vect{\lambda}^*$ to be a finite solution for $\vect{\lambda}$ in the Lagrangian optimization. Then if $\beta$ is chosen such that $\beta > \lambda_i^*$ for all $i$, and CCFR and CFR are used to respectively select the strategy sequences $\sigma_1^1,...,\sigma_1^T$ and $\sigma_2^1,...,\sigma_2^T$, the following holds:
\begin{align}
&f_i(\Psi(\overline{\sigma}_1^T)) \leq \frac{R_\lambda^T(\beta)}{\beta-\lambda^*_i} + \frac{2\left(\Delta_u + k\beta F\right)M \sqrt{|A|}}{(\beta-\lambda^*_i)\sqrt{T}}\hspace{5em}\forall i \in \{1,...,k\}
\end{align}
\end{thm}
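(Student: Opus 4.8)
The plan is to remove the additive $\Delta_u/\beta$ slack of Theorem~\ref{thm:conbound} by comparing against the \emph{optimal} multiplier $\vect{\lambda}^*$ rather than against the all-or-nothing vector used there. Since $f_1,\dots,f_k$ satisfy Slater's condition, strong duality holds: optimization (\ref{eq:lgopt}) has a finite saddle point $(\vect{x}^*,\vect{\lambda}^*)$ whose value equals the constrained optimum $p^*=\max_{\vect{x}\text{ feasible}}\min_{\vect{y}}\vect{x}^\intercal\mat{A}\vect{y}$. Write $\sigma_1^*$ for a behavioral strategy with $\Psi(\sigma_1^*)=\vect{x}^*$; it is feasible, so $f_i(\Psi(\sigma_1^*))\le 0$ for all $i$. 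Fix a constraint index $j$ and define the comparison multiplier $\hat{\lambda}$ by $\hat{\lambda}_i=\lambda_i^*$ for $i\neq j$ and $\hat{\lambda}_j=\beta$. Because $\lambda_i^*<\beta$ for every $i$ by hypothesis, $\hat{\lambda}\in[0,\beta]^k$, so it is an admissible competitor for the $\vect{\lambda}$-regret.

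First I would add two regret inequalities. Applying Theorem~\ref{thmreg} with the comparator $\sigma_1^*$ bounds $\frac{1}{T}\sum_t[\tilde{u}(\sigma_1^*,\sigma_2^t,\lambda^t)-\tilde{u}(\sigma_1^t,\sigma_2^t,\lambda^t)]$ by $\tilde{R}_1^T$, and the definition of $R_\lambda^T(\beta)$ applied to the competitor $\hat{\lambda}$ bounds $\frac{1}{T}\sum_t[\tilde{u}(\sigma_1^t,\sigma_2^t,\lambda^t)-\tilde{u}(\sigma_1^t,\sigma_2^t,\hat{\lambda})]$ by $R_\lambda^T(\beta)$. Summing cancels the $\tilde{u}(\sigma_1^t,\sigma_2^t,\lambda^t)$ terms. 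I then expand the remaining $\tilde{u}$ using its definition, discard the term $-\frac{1}{T}\sum_t\sum_i\lambda_i^t f_i(\Psi(\sigma_1^*))$ (nonnegative by feasibility of $\sigma_1^*$ and $\lambda_i^t\ge 0$), and apply Jensen's inequality to each convex $f_i$ to replace $\frac{1}{T}\sum_t f_i(\Psi(\sigma_1^t))$ by $f_i(\Psi(\overline{\sigma}_1^T))$. This produces
\begin{equation*}
\frac{1}{T}\sum_{t} u(\sigma_1^*,\sigma_2^t) - \frac{1}{T}\sum_{t} u(\sigma_1^t,\sigma_2^t) + \sum_i \hat{\lambda}_i f_i(\Psi(\overline{\sigma}_1^T)) \le \tilde{R}_1^T + R_\lambda^T(\beta).
\end{equation*}

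Next I would insert two utility bounds. For the first average, $u(\sigma_1^*,\sigma_2^t)\ge\min_{\sigma_2}u(\sigma_1^*,\sigma_2)=p^*$ since $\sigma_1^*$ is the constrained maximin strategy, so $\frac{1}{T}\sum_t u(\sigma_1^*,\sigma_2^t)\ge p^*$. For the second, bilinearity of $u$ together with the CFR regret $R_2^T$ of player 2 gives $\frac{1}{T}\sum_t u(\sigma_1^t,\sigma_2^t)=\min_{\sigma_2}u(\overline{\sigma}_1^T,\sigma_2)+R_2^T$, and the saddle inequality $\min_{\sigma_2}u(\overline{\sigma}_1^T,\sigma_2)-\sum_i\lambda_i^* f_i(\Psi(\overline{\sigma}_1^T))\le p^*$ (i.e.\ $\vect{x}^*$ maximizes the Lagrangian $\min_{\vect{y}}\vect{x}^\intercal\mat{A}\vect{y}-\sum_i\lambda_i^* f_i(\vect{x})$ at value $p^*$) bounds this by $p^*+\sum_i\lambda_i^* f_i(\Psi(\overline{\sigma}_1^T))+R_2^T$. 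Substituting both bounds, the $p^*$ terms cancel, and the $\sum_i\lambda_i^* f_i(\Psi(\overline{\sigma}_1^T))$ contribution cancels against the matching coordinates of $\sum_i\hat{\lambda}_i f_i(\Psi(\overline{\sigma}_1^T))$, leaving only the single surviving coordinate $(\beta-\lambda_j^*)f_j(\Psi(\overline{\sigma}_1^T))\le \tilde{R}_1^T+R_2^T+R_\lambda^T(\beta)$. Dividing by $\beta-\lambda_j^*>0$ and substituting $\tilde{R}_1^T$ from Theorem~\ref{thmreg} together with the CFR bound on $R_2^T$ (whose sum is $2(\Delta_u+k\beta F)M\sqrt{|A_1|}/\sqrt{T}$) yields the claim.

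The main obstacle is the cancellation in the final step. Whereas Theorem~\ref{thm:conbound} bounds the utility gap crudely by $\Delta_u$, here the saddle-point structure of $\vect{\lambda}^*$ — that $\vect{x}^*$ maximizes the Lagrangian at value $p^*$, combined with the maximin lower bound $p^*$ for $\sigma_1^*$ — controls $\frac{1}{T}\sum_t u(\sigma_1^t,\sigma_2^t)-\frac{1}{T}\sum_t u(\sigma_1^*,\sigma_2^t)$ by regret terms plus exactly the $\lambda^*$-weighted constraint values, which then vanish. The bookkeeping that must be gotten right is the construction of $\hat{\lambda}$: it has to agree with $\vect{\lambda}^*$ on every coordinate but $j$ so that only the $j$-th constraint survives, with the precise weight $\beta-\lambda_j^*$ that appears in the denominator.
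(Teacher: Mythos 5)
Your proposal is correct: every step checks out, including the sign bookkeeping (dropping $-\frac{1}{T}\sum_t\sum_i\lambda_i^t f_i(\Psi(\sigma_1^*))\ge 0$ by feasibility, the Jensen step using $\Psi(\overline{\sigma}_1^T)=\frac{1}{T}\sum_t\Psi(\sigma_1^t)$, and the final arithmetic $(\Delta_u+2k\beta F)+\Delta_u = 2(\Delta_u+k\beta F)$ matching the stated constant). It uses the same essential ingredients as the paper --- Slater/strong duality to get the finite saddle pair $(\vect{x}^*,\vect{\lambda}^*)$ with value $p^*$, the tilted-regret bound of Theorem~\ref{thmreg}, the CFR bound on $R_2^T$, the $\vect{\lambda}$-regret, and Jensen --- but the organization is genuinely different. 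The paper never instantiates explicit comparators: it defines the exploitability gap of $\overline{\sigma}_1^T$ in the modified game with utility $\tilde{u}$ and $\vect{\lambda}\in[0,\beta]^k$, lower-bounds that gap by $\sum_{i=1}^k(\beta-\lambda_i^*)\,(f_i(\Psi(\overline{\sigma}_1^T)))^+$ (using the exact-penalty identity $\min_{\vect{\lambda}\in[0,\beta]^k}$ of the Lagrangian term $=-\beta\sum_i(f_i)^+$, then splitting $\beta=\lambda_i^*+(\beta-\lambda_i^*)$ and invoking strong duality to swap the min and max), upper-bounds the same gap by $\tilde{R}_1^T+R_2^T+R_\lambda^T(\beta)$, and compares; the per-constraint claim follows by discarding all but one nonnegative summand. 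Your route replaces this two-sided gap argument with a direct chain of regret inequalities against hand-picked comparators, where the coordinate-wise vector $\hat{\lambda}$ ($\hat{\lambda}_j=\beta$, $\hat{\lambda}_i=\lambda_i^*$ otherwise) does the work that the paper's positive-part splitting does. The trade-off: the paper's formulation yields the slightly stronger simultaneous bound on $\sum_i(\beta-\lambda_i^*)(f_i)^+$ (which it reuses in its discussion of infeasible constraints), whereas your version is more elementary --- no gap quantity, no $(\cdot)^+$ algebra --- and makes the origin of the $\beta-\lambda_j^*$ denominator completely transparent.
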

\addtocounter{thm}{2}

\begin{proof}
We establish the theorem by proving upper and lower bounds on the gap in exploitability between $\overline{\sigma}_1^T$ and the optimal strategy in the modified game with utility function $\tilde{u}$ and $\vect{\lambda} \in [0,\beta]$, and comparing the bounds.

First, we show that the exploitability gap for any strategy can be lower bounded in terms of the strategy's constraint violations. Because we assume a constraint qualification, there is a finite $\vect{\lambda}^*$ that is a solution the optimization
\begin{equation}
\vect{\lambda}^* \in \argmin_{\vect{\lambda} \geq 0} \max_{\sigma_1 \in \Sigma_1} \min_{\sigma_2 \in \Sigma_2} \tilde{u}(\sigma_1,\sigma_2,\vect{\lambda}).
\end{equation}
Note that by assuming a constraint qualification, strong duality holds and we can rearrange the order of min and max terms in the optimization without changing the solution. Consider an arbitrary $\sigma_1' \in \Sigma_1$:
\begin{align*}
\min_{\vect{\lambda} \in [0,\beta]^k} \min_{\sigma_2 \in \Sigma_2} u(\sigma_1',\sigma_2) - \sum_{i=1}^k\lambda_i f_i(\Psi(\sigma_1')) &= \min_{\sigma_2 \in \Sigma_2} u(\sigma_1',\sigma_2) - \sum_{i=1}^k\beta(f_i(\Psi(\sigma_1')))^+\\
 &= \min_{\sigma_2 \in \Sigma_2} u(\sigma_1',\sigma_2) - \sum_{i=1}^k\lambda^*_i (f_i(\Psi(\sigma_1')))^+ - \sum_{i=1}^k(\beta-\lambda^*_i)(f_i(\Psi(\sigma_1')))^+\\
&\leq \max_{\sigma_1 \in \Sigma_1} \min_{\sigma_2 \in \Sigma_2} \left[u(\sigma_1,\sigma_2) - \sum_{i=1}^k \lambda^*_i f_i(\Psi(\sigma_1))\right] - \sum_{i=1}^k(\beta-\lambda^*_i)(f_i(\Psi(\sigma_1')))^+\\
&= \min_{\vect{\lambda} \geq 0} \max_{\sigma_1 \in \Sigma_1} \min_{\sigma_2 \in \Sigma_2} \left[u(\sigma_1,\sigma_2) - \sum_{i=1}^k \lambda_i f_i(\Psi(\sigma_1))\right] - \sum_{i=1}^k(\beta-\lambda^*_i)(f_i(\Psi(\sigma_1')))^+\\
&\leq \min_{\vect{\lambda} \in [0,\beta]^k} \max_{\sigma_1 \in \Sigma_1} \min_{\sigma_2 \in \Sigma_2} \left[u(\sigma_1,\sigma_2) - \sum_{i=1}^k \lambda_i f_i(\Psi(\sigma_1))\right] - \sum_{i=1}^k(\beta-\lambda^*_i)(f_i(\Psi(\sigma_1')))^+\\
&= \max_{\sigma_1 \in \Sigma_1} \min_{\vect{\lambda} \in [0,\beta]^k}  \min_{\sigma_2 \in \Sigma_2} \left[u(\sigma_1,\sigma_2) - \sum_{i=1}^k \lambda_i f_i(\Psi(\sigma_1))\right] - \sum_{i=1}^k(\beta-\lambda^*_i)(f_i(\Psi(\sigma_1')))^+
\end{align*}
In the last line we use strong duality, which follows from the assumed constraint qualification. This holds for any $\sigma_1'$, and thus we have established that
\begin{equation}
\max_{\sigma_1 \in \Sigma_1}  \min_{\vect{\lambda} \in [0,\beta]^k} \min_{\sigma_2 \in \Sigma_2} \tilde{u}(\sigma_1,\sigma_2,\vect{\lambda}) - \min_{\vect{\lambda} \in [0,\beta]^k} \min_{\sigma_2 \in \Sigma_2} \tilde{u}(\overline{\sigma}^T_1,\sigma_2,\vect{\lambda}) \geq \sum_{i=1}^k (\beta - \lambda^*_i) (f_i(\Psi(\overline{\sigma}_1^T)))^+
\label{eq:gaplb}
\end{equation}

We now show an upper bound on exploitability based on the regrets of the sequences used in CCFR:
\begin{align*}
\min_{\vect{\lambda} \in [0,\beta]^k} \min_{\sigma_2 \in \Sigma_2} \left[ u(\overline{\sigma}_1^T,\sigma_2) - \sum_{i=1}^k\lambda_i f_i(\Psi(\overline{\sigma}_1^T))\right] &\geq \min_{\vect{\lambda} \in [0,\beta]^k} \min_{\sigma_2 \in \Sigma_2} \sum_{t=1}^k \left[u(\sigma_1^t,\sigma_2) - \sum_{i=1}^k\lambda_i f_i(\Psi(\sigma_1^t))\right]\\ &\qquad\qquad\qquad\text{by Jensen's inequality}\\
&\geq \sum_{t=1}^T \left[ u(\sigma_1^t,\sigma_2^t) - \sum_{i=1}^k\lambda^t_i f_i(\Psi(\sigma_1^t))\right] - R^T_2 - R^T_\lambda(\beta)\\
&\geq \max_{\sigma_1 \in \Sigma_1} \sum_{t=1}^T \left[ u(\sigma_1,\sigma_2^t) - \sum_{i=1}^k\lambda^t_i f_i(\Psi(\sigma_1))\right] - \tilde{R}^T_1 - R^T_2 - R^T_\lambda(\beta)\\
&\geq \max_{\sigma_1 \in \Sigma_1} \min_{\vect{\lambda} \in [0,\beta]^k}  \min_{\sigma_2 \in \Sigma_2} \left[u(\sigma_1,\sigma_2) - \sum_{i=1}^k \lambda_i f_i(\Psi(\sigma_1))\right] - \tilde{R}^T_1 - R^T_2 - R^T_\lambda(\beta)
\end{align*}
This gives us an upper bound of:
\begin{equation}
\max_{\sigma_1 \in \Sigma_1}  \min_{\vect{\lambda} \in [0,\beta]^k} \min_{\sigma_2 \in \Sigma_2} \tilde{u}(\sigma_1,\sigma_2,\vect{\lambda}) - \min_{\vect{\lambda} \in [0,\beta]^k} \min_{\sigma_2 \in \Sigma_2} \tilde{u}(\overline{\sigma}^T_1,\sigma_2,\vect{\lambda}) \leq \tilde{R}^T_1 + R^T_2 + R^T_\lambda(\beta)
\label{eq:gapub}
\end{equation}

Comparing to (\ref{eq:gapub}), this gives us that, for any $j\in\{1,...,k\}$
\begin{align}
\sum_{i=1}^k (\beta - \lambda^*_i) (f_i(\Psi(\overline{\sigma}_1^T)))^+ &\leq \tilde{R}^T_1 + R^T_2 + R^T_\lambda(\beta)\\
(\beta - \lambda^*_j) (f_j(\Psi(\overline{\sigma}_1^T)))^+ &\leq \tilde{R}^T_1 + R^T_2 + R^T_\lambda(\beta)\\
f_j(\Psi(\overline{\sigma}_1^T)) &\leq \frac{\tilde{R}^T_1 + R^T_2 + R^T_\lambda(\beta)}{\beta - \lambda^*_j}
\end{align}
In the last line we use the assumption that $\beta > \lambda^*_j$. Substituting in the CCFR and CFR bounds for $\tilde{R}^T_1$ and $R^T_2$ completes the proof.

\end{proof}

\section{Infeasible constraints}

By Theorem~\ref{thm:infcon}, CCFR still converges when the constraints are infeasible. Effectively, the constrained strategy set is relaxed so that the constrained player instead chooses from the set of strategies that minimize the total ($L_1$ norm) infeasibility of the constraints. Thus, (\ref{eqcon}) says that the constrained player's strategy isn't much more infeasible than the least infeasible strategy, and (\ref{eqexp}) says that the strategies are near-optimal given  this restriction.

The relaxation of the constrained set follows from the bounding of the Lagrange multipliers. In fact, with bounded multipliers, the Lagrangian optimization becomes equivalent to a regularized optimization.

\begin{lem}
For any optimization function $g\colon \mathcal{X} \to \mathbb{R}$, constraint set $f_1,...,f_k$, and bound $\beta > 0$ the following holds:
\begin{equation}
\min_{\vect{x} \in \mathcal{X}} \max_{0 \leq \vect{\lambda} \leq \beta} g(\vect{x}) + \sum_{i=1}^k \lambda_i f_i(\vect{x}) = \min_{\vect{x}\in \mathcal{X}} g(\vect{x}) + \beta\sum_{i=1}^k (f_i(\vect{x}))^+.
\end{equation}
\end{lem}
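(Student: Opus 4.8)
The plan is to solve the inner maximization over $\vect{\lambda}$ in closed form for each fixed $\vect{x} \in \mathcal{X}$, and then pass the resulting pointwise identity through the outer minimization. The essential observation is that, for fixed $\vect{x}$, the objective $g(\vect{x}) + \sum_{i=1}^k \lambda_i f_i(\vect{x})$ is separable across the coordinates of $\vect{\lambda}$: the term $g(\vect{x})$ is constant in $\vect{\lambda}$, and each $\lambda_i$ appears only in its own summand $\lambda_i f_i(\vect{x})$. Hence the constrained maximization over the box $[0,\beta]^k$ decomposes into $k$ independent one-dimensional maximizations, plus the constant $g(\vect{x})$.

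First I would fix $\vect{x}$ and treat each value $f_i(\vect{x})$ as a constant. Each one-dimensional problem $\max_{\lambda_i \in [0,\beta]} \lambda_i f_i(\vect{x})$ maximizes a linear function over an interval, so its optimum is attained at an endpoint: at $\lambda_i = \beta$ when $f_i(\vect{x}) > 0$, and at $\lambda_i = 0$ when $f_i(\vect{x}) \leq 0$. In both cases the optimal value equals $\beta (f_i(\vect{x}))^+$. Summing over $i$ and reinstating the constant term gives, for every $\vect{x} \in \mathcal{X}$,
\[
\max_{0 \leq \vect{\lambda} \leq \beta} \Bigl[ g(\vect{x}) + \sum_{i=1}^k \lambda_i f_i(\vect{x}) \Bigr] = g(\vect{x}) + \beta \sum_{i=1}^k (f_i(\vect{x}))^+.
\]
Finally I would take the minimum over $\vect{x} \in \mathcal{X}$ of both sides; since the equality holds pointwise for every $\vect{x}$, the two minima coincide, which is exactly the claimed identity.

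There is no substantive obstacle here: the result follows entirely from the separability of the box-constrained linear maximization together with the elementary fact that a linear function on an interval attains its maximum at an endpoint. The only point worth stating carefully is the endpoint case analysis that produces the $(\cdot)^+$ operator. It is also worth remarking that convexity of $g$ and of the $f_i$ plays no role whatsoever; the identity is purely algebraic and holds for arbitrary functions $g$ and $f_i$, with the sole structural ingredient being that $\vect{\lambda}$ enters linearly and ranges over a product of intervals.
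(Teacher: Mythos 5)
Your proof is correct, and it takes a genuinely more elementary route than the paper's. You solve the inner box-constrained maximization in closed form: for fixed $\vect{x}$ the objective is linear and separable in $\vect{\lambda}$, so each coordinate problem $\max_{\lambda_i \in [0,\beta]} \lambda_i f_i(\vect{x})$ is resolved at an endpoint and yields $\beta (f_i(\vect{x}))^+$, giving the pointwise identity
\begin{equation*}
\max_{0 \leq \vect{\lambda} \leq \beta} \Bigl[ g(\vect{x}) + \sum_{i=1}^k \lambda_i f_i(\vect{x}) \Bigr] = g(\vect{x}) + \beta \sum_{i=1}^k (f_i(\vect{x}))^+,
\end{equation*}
after which minimizing both sides over $\vect{x}$ is immediate. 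The paper instead dualizes the upper bound $\lambda_i \leq \beta$ by introducing slack variables $\vect{\xi} \geq 0$, rewriting the box-constrained maximization as an unbounded maximization over $\vect{\lambda} \geq 0$ plus a penalty term $\sum_i (\beta - \lambda_i)\xi_i$, and then arguing that the inner maximization forces $f_i(\vect{x}) \leq \xi_i$ (else $\lambda_i$ could grow without bound) and that at the optimum $\xi_i = (f_i(\vect{x}))^+$. That route mirrors the standard exact-penalty reformulation from constrained optimization and situates the lemma within the Lagrangian machinery used throughout the paper, but it relies on a min--max manipulation that itself deserves justification; your pointwise evaluation avoids that entirely and makes transparent your closing observation, which the paper's argument obscures: no convexity of $g$ or the $f_i$ is needed, since the identity is purely algebraic, resting only on linearity of the objective in $\vect{\lambda}$ over a product of intervals.
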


\begin{proof}
As $\beta > 0$, it follows that $0 \leq \vect{\lambda} \leq \beta$ is feasible and thus
\begin{align*}
\min_{\vect{x} \in \mathcal{X}} \max_{0 \leq \vect{\lambda} \leq \beta} g(\vect{x}) + \sum_{i=1}^k \lambda_i f_i(\vect{x}) &= \min_{\substack{\vect{x} \in \mathcal{X}\\\vect{\xi} \geq 0}} \max_{0 \leq \vect{\lambda}} g(\vect{x}) + \sum_{i=1}^k \lambda_i f_i(\vect{x}) + \sum_{i=1}^k (\beta - \lambda_i) \xi_i\\
&= \min_{\substack{\vect{x} \in \mathcal{X}\\\vect{\xi} \geq 0}} \max_{0 \leq \vect{\lambda}} g(\vect{x}) + \sum_{i=1}^k \lambda_i (f_i(\vect{x}) - \xi_i) + \beta\sum_{i=1}^k \xi_i\\
\intertext{For each $i$, necessarily $f_i(\vect{x}) \leq \xi_i$, or unbounded $\lambda_i$ could be chosen. Given this constraint, $\lambda_i = 0$.}
&= \min_{\substack{\vect{x} \in \mathcal{X}\\ \xi_i \geq (f_i(\vect{x}))^+}} g(\vect{x}) + \beta\sum_{i=1}^k \xi_i\\
&=  \min_{\vect{x}\in \mathcal{X}} g(\vect{x}) + \beta\sum_{i=1}^k (f_i(\vect{x}))^+ \qedhere
\end{align*}
\end{proof}

The fact the strategy set is relaxed to strategies that minimize infeasibility as measured by the $L_1$ norm in particuar is a consequence of bounding the $L_\infty$ norm of the Lagrange multipliers. If we cared more about maximum violation of any particular constraint, we could bound the $L_1$ norm of the multipliers instead.

\section{Constraints for opponent modeling}

In poker and similar games, the actions which constitute a history $h$ can be uniquely decomposed into two sets of \emph{private chance actions} $\our[h]$ and $\opp[h]$ which are only visible to ourselves or our opponents, respectively, and \emph{public actions} $\pub[h]$ which both players observe. In poker, the private chance actions are the private cards, consituting a \emph{hand}, dealt to each player---we will use this terminology. Let $\Priv_i$ and $\Priv_{-i}$ be the sets of possible hands (so $\Priv_i = \{ \our[h] \mid h \in H \}$), and let $S = \{ \pub[h] \mid h \in H \}$ be the set of possible public action sequences. For any $\our \in \Priv_i, \opp \in \Priv_{-i}, \pub \in \Pub$, let $h(\our,\opp,\pub)$ be the corresponding history if it exists and define $\pi(\our,\opp,\pub)=\pi(h(\our,\opp,\pub))$. If the history doesn't exist, we let $\pi_c(\our,\opp,\pub) = 0$ to simplify the analysis. For brevity, when $h(\our,\opp,\pub) \in Z$, we define $u(\our,\opp,\pub)=u(h(\our,\opp,\pub))$.

We assume that, for a given $h$, whether $h \in Z$ is uniquely determined by $\pub[h]$. We thus define $\Pubterm = \{ \pub \in \Pub \mid h(\cdot,\cdot,\pub) \in Z \}$ to be the set of public action sequences that lead to terminal histories. In the games we are interested in, there are terminal histories where the public actions fully determine the players' utilities, such that $u(\our,\opp,\pub) = u(\our',\opp',\pub)$ for all $\our,\our',\opp,\opp'$ (in poker, this occurs when a player folds, in which case the cards dealt do not matter). We define $\Pubtermu \subseteq \Pubterm$ to be the set of public action sequences that lead to such terminal histories, and define $u(\pub[z]) =u(z)$ for $z \in \Pubtermu$.

We introduce a method for opponent modeling from \emph{partially observed games}. In this setting, the game is played until a terminal history $z$ is reached, then $\our[z]$ and $\opp[z]$ are publicly revealed only if they affect the game outcome (i.e. $\pub[z] \notin \Pubtermu$). Thus, while we always observe $\our[z]$ and $\pub[z]$, we only sometimes observe $\opp[z]$. We wish to use these observed games to model the opponent, but without observing $\opp[z]$ we do not have direct samples of their strategy. If we only use data from the games where we do observe $\opp[z]$, then  our samples are biased. Instead of directly building a model of the opponent strategy, we construct a set of constraints which the strategy must satisfy in order to be consistent with the observed data.

Our first set of constraints tells us how often our opponent plays to reach any part of the game, summed across all of his hands. In particular, this tells us how often he folds given the opportunity to do so. We construct these constraints estimating how often we hold a hand $\our \in \Priv_i$ and the opponent plays to reach a public node $\pub \in \S$. Because we always observe $\our[h]$ and $\pub[h]$, we have unbiased samples of $\Pr_\sigma[\our,\pub] = \sum_{\opp} \pi^\sigma(\our,\opp,\pub)$ for each $\our \in \Priv_i$ and $\pub \in \Pub$.\footnote{We assume that our own strategy $\sigma_i$ plays to reach each history with nonzero probability.} We apply the Wilson score interval \cite{Wilson27} to calculate bounds $L_{\our,\pub}$ and $U_{\our,\pub}$ such that $L_{\our,\pub} \leq \Pr_\sigma[\our,\pub] \leq U_{\our,\pub}$ with probability at least $\gamma$, where $\gamma$ is a chosen confidence level. Noting that 
\begin{equation*}
\Pr\nolimits_\sigma[\our,\pub] = \sum_{\opp\in\Priv_{-i}} \pi_c(\our,\opp,\pub)\pi^{\sigma_i}_i(\our,\opp,\pub)\pi^{\sigma_{-i}}_{-i}(\our,\opp,\pub)
\end{equation*}
and that $\pi_c$ and $\pi^{\sigma_i}_i$ are known, these bounds thus constitute linear sequence-form constraints. We also note that our own reach probability $\pi^{\sigma_i}_i(\our,\opp,\pub)$ cannot depend on $\opp$, as we don't observe it, and thus define $\pi^{\sigma_i}_i(\our,\pub)=\pi^{\sigma_i}_i(\our,\opp,\pub)$. We can also replace this quantity with an observed estimate $\hat{\pi}^{\sigma_i}_i(\our,\pub)$, which is unbiased and better correlated with the observed probability $\hat{\Pr}_\sigma[\pub]$. This gives us sequence-form constraints
\begin{equation}
L_{\our,\pub} \leq \hat{\pi}^{\sigma_i}_i(\our,\pub) \sum_{\opp\in\Priv_{-i}} \pi_c(\our,\opp,\pub)\pi^{\feas}_{-i}(\opp,\pub) \leq U_{\our,\pub} \qquad \forall \our \in \Priv_i,\forall \pub \in \Pub
\label{eq:phinner}
\end{equation}
where $\feas \in \Sigma_{-i}$ is the constrained strategy.

Our second set of constraints tells us how often the opponent plays to reach a showdown, i.e. a terminal history $z$ such that $\pub[z] \in \Pubterm \setminus \Pubtermu$. We always observe the full history when $\pub[z] \in \Pubterm \setminus \Pubtermu$ occurs, and thus have unbiased samples of $\pi^\sigma(z)$. Again, we calculate Wilson score intervals $L_z \leq \pi^\sigma(z) \leq U_z$ with confidence $\gamma$, which become the sequence-form constraints
\begin{equation}
L_z \leq \hat{\pi}^{\sigma_i}_i(z) \pi_c(z) \pi^{\feas}_{-i}(z) \leq U_z\qquad \forall z \in Z\text{ s.t. }\pub[z] \in \Pubterm \setminus \Pubtermu
\label{eq:phterm}
\end{equation}

Together, constraints (\ref{eq:phinner}) and (\ref{eq:phterm}) are sufficient to ensure that we maximally exploit the modeled opponent in the limit. To show this, we begin by characterizing when two strategies are equivalent in terms of expected utility.

\begin{lem}
Let $\oppstrat,\oppstratp \in \Sigma_{-i}$ be two strategies such that
\begin{alignat}{2}
\sum_{\opp \in \Priv_{-i}} \pi_c(\our,\opp,\pub)\pi^{\oppstrat}_{-i}(\opp,\pub) &= \pi_c(\our,\opp,\pub)\pi^{\oppstratp}_{-i}(\opp,\pub)\qquad &&\forall \our \in \Priv_i,\forall \pub \in \Pubtermu\label{eq:eqconf}\\
\pi^{\oppstrat}_{-i}(\opp,\pub) &= \pi^{\oppstratp}_{-i}(\opp,\pub) &&\forall \opp \in \Priv_{-i}, \forall \pub \in \Pubterm \setminus \Pubtermu.
\label{eq:eqconsd}
\end{alignat}
Then for any $\ourstrat \in \Sigma_i$, it is the case that
\begin{equation}
u(\ourstrat,\oppstrat) = u(\ourstrat,\oppstratp).
\end{equation}
\label{lem:eqexp}
\end{lem}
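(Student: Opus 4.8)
The plan is to expand the expected utility as a sum over terminal histories, decompose each terminal into its private cards and its public sequence, and then split the sum according to whether the public sequence lies in $\Pubtermu$ or in $\Pubterm \setminus \Pubtermu$. Using that the reach probability of a terminal $z = h(\our,\opp,\pub)$ factors as $\pi_c(\our,\opp,\pub)\,\pi^{\ourstrat}_i(\our,\pub)\,\pi^{\oppstrat}_{-i}(\opp,\pub)$ --- where each player's contribution depends only on that player's own hand and the public sequence --- I would write
\begin{equation*}
u(\ourstrat,\oppstrat) = \sum_{\pub \in \Pubterm} \sum_{\our \in \Priv_i} \sum_{\opp \in \Priv_{-i}} \pi_c(\our,\opp,\pub)\, \pi^{\ourstrat}_i(\our,\pub)\, \pi^{\oppstrat}_{-i}(\opp,\pub)\, u(\our,\opp,\pub).
\end{equation*}
Splitting $\Pubterm = \Pubtermu \cup (\Pubterm \setminus \Pubtermu)$ yields two groups of terms, and the goal is to show that each group is unchanged when $\oppstrat$ is replaced by $\oppstratp$.

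For $\pub \in \Pubtermu$ the utility is card-independent, so $u(\our,\opp,\pub) = u(\pub)$ pulls outside the sum over $\opp$, leaving the contribution $\sum_{\pub \in \Pubtermu} u(\pub) \sum_{\our} \pi^{\ourstrat}_i(\our,\pub) \bigl[ \sum_{\opp} \pi_c(\our,\opp,\pub)\, \pi^{\oppstrat}_{-i}(\opp,\pub) \bigr]$. The bracketed inner sum is exactly the marginal quantity that hypothesis~(\ref{eq:eqconf}) requires to agree between $\oppstrat$ and $\oppstratp$ for every $\our$ and every $\pub \in \Pubtermu$, so this group of terms is identical for the two strategies. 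For $\pub \in \Pubterm \setminus \Pubtermu$, hypothesis~(\ref{eq:eqconsd}) gives $\pi^{\oppstrat}_{-i}(\opp,\pub) = \pi^{\oppstratp}_{-i}(\opp,\pub)$ for every $\opp$, so every individual summand is unchanged and this group is identical as well. Adding the two groups gives $u(\ourstrat,\oppstrat) = u(\ourstrat,\oppstratp)$, and since $\ourstrat$ was arbitrary the lemma follows.

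The argument needs no deep idea; the only real care is in the reach-probability factorization and in matching the two grouped sums to the precise form of the two hypotheses. In particular, I would lean on the fact that $\pi^{\ourstrat}_i$ does not depend on $\opp$ (it is written $\pi^{\ourstrat}_i(\our,\pub)$), which is what lets me factor our reach out of the $\opp$-sum in the $\Pubtermu$ case and thereby isolate exactly the marginal appearing in~(\ref{eq:eqconf}); without this, the fold-type terms would not reduce to the marginal form that the confidence-interval constraints control. The main (and minor) obstacle is bookkeeping: ensuring that nonexistent histories $h(\our,\opp,\pub)$ contribute nothing (handled by the convention $\pi_c(\our,\opp,\pub)=0$), and that the partition of $\Pubterm$ into fold-type and showdown-type terminals is clean, so that each group is governed by the correct hypothesis.
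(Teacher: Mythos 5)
Your proposal is correct and follows essentially the same route as the paper's own proof: expand $u(\ourstrat,\oppstrat)$ over terminals factored into $(\our,\opp,\pub)$, split the sum over $\Pubtermu$ versus $\Pubterm\setminus\Pubtermu$, pull the card-independent $u(\pub)$ out of the $\opp$-sum so that hypothesis~(\ref{eq:eqconf}) applies to the resulting marginal, and apply hypothesis~(\ref{eq:eqconsd}) pointwise to the showdown terms. The supporting observations you flag --- the factorization of the reach probability, the fact that $\pi^{\ourstrat}_i$ is independent of $\opp$, and the convention $\pi_c(\our,\opp,\pub)=0$ for nonexistent histories --- are exactly the ingredients the paper relies on.
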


\begin{proof}
\begin{align}
u(\ourstrat,\oppstrat) &= \sum_{z \in Z} \pi_c(z)\pi^{\ourstrat}_i(z)\pi^{\oppstrat}_{-i}(z) u(z)\\
&= \sum_{\pub \in \Pubtermu}\sum_{\our \in \Priv_i} \sum_{\opp \in \Priv_{-i}} \pi_c(\our,\opp,\pub)\pi^{\ourstrat}_i(\our,\pub)\pi^{\oppstrat}_{-i}(\opp,\pub) u(\our,\opp,\pub)\\
&\qquad + \sum_{\pub \in \Pubterm \setminus \Pubtermu}\sum_{\our \in \Priv_i} \sum_{\opp \in \Priv_{-i}} \pi_c(\our,\opp,\pub)\pi^{\ourstrat}_i(\our,\pub)\pi^{\oppstrat}_{-i}(\opp,\pub) u(\our,\opp,\pub)\\
&= \sum_{\pub \in \Pubtermu} u(\pub) \sum_{\our \in \Priv_i} \pi^{\ourstrat}_i(\our,\pub) \sum_{\opp \in \Priv_{-i}} \pi_c(\our,\opp,\pub)\pi^{\oppstrat}_{-i}(\opp,\pub)\\
&\qquad + \sum_{\pub \in \Pubterm \setminus \Pubtermu}\sum_{\our \in \Priv_i} \sum_{\opp \in \Priv_{-i}} \pi_c(\our,\opp,\pub)\pi^{\ourstrat}_i(\our,\pub)\pi^{\oppstrat}_{-i}(\opp,\pub) u(\our,\opp,\pub)\\
&= \sum_{\pub \in \Pubtermu} u(\pub) \sum_{\our \in \Priv_i} \pi^{\ourstrat}_i(\our,\pub) \sum_{\opp \in \Priv_{-i}} \pi_c(\our,\opp,\pub)\pi^{\oppstratp}_{-i}(\opp,\pub)\\
&\qquad + \sum_{\pub \in \Pubterm \setminus \Pubtermu}\sum_{\our \in \Priv_i} \sum_{\opp \in \Priv_{-i}} \pi_c(\our,\opp,\pub)\pi^{\ourstrat}_i(\our,\pub)\pi^{\oppstratp}_{-i}(\opp,\pub) u(\our,\opp,\pub)\\
&= u(\ourstrat,\oppstratp) \qedhere
\end{align}
\end{proof}

\begin{thm}
Given partial observations from $n$ games of an opponent using strategy $\sigma_{-i}$, let $\sigma_i$ be a strategy computed by running CCFR with constraints (\ref{eq:phinner}) and (\ref{eq:phterm}) for $T$ iterations. Then as $n$ and $T$ approach infinity, $\sigma_i$ converges to a best response to $\sigma_{-i}$.
\end{thm}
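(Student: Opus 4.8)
The plan is to take the two limits in turn: first let $n\to\infty$ so that the data-driven constraints pin the opponent down to strategies that are utility-equivalent to $\sigma_{-i}$, then let $T\to\infty$ and appeal to the CCFR exploitability guarantee. For the first limit, I would use consistency of the Wilson score interval: by the strong law of large numbers the empirical frequencies underlying each interval converge almost surely to the true probabilities $\Pr_\sigma[\our,\pub]$ and $\pi^\sigma(z)$ generated by $\sigma_{-i}$, while for any fixed confidence $\gamma$ the interval widths shrink at rate $\mathcal{O}(1/\sqrt{n})$; similarly $\hat\pi^{\sigma_i}_i\to\pi^{\sigma_i}_i$. Hence the endpoints $L_{\our,\pub},U_{\our,\pub}$ and $L_z,U_z$ collapse onto the true probabilities, and the inequality constraints (\ref{eq:phinner}) and (\ref{eq:phterm}) become equalities forcing $\feas$ to reproduce the observable statistics of $\sigma_{-i}$.

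Next I would verify that these limiting equalities are exactly the hypotheses of Lemma~\ref{lem:eqexp} with $\oppstratp=\sigma_{-i}$. Because our own strategy reaches every history with positive probability, $\pi^{\sigma_i}_i(\our,\pub)>0$, so dividing it out of the limit of (\ref{eq:phinner}) yields $\sum_{\opp}\pi_c(\our,\opp,\pub)\pi^{\feas}_{-i}(\opp,\pub)=\sum_{\opp}\pi_c(\our,\opp,\pub)\pi^{\sigma_{-i}}_{-i}(\opp,\pub)$ for every $\our\in\Priv_i$ and $\pub\in\Pub$; restricting to $\pub\in\Pubtermu$ is condition (\ref{eq:eqconf}). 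Dividing the positive factor $\pi^{\sigma_i}_i(z)\pi_c(z)$ out of the limit of (\ref{eq:phterm}) gives $\pi^{\feas}_{-i}(\opp,\pub)=\pi^{\sigma_{-i}}_{-i}(\opp,\pub)$ for $\pub\in\Pubterm\setminus\Pubtermu$, which is condition (\ref{eq:eqconsd}). Lemma~\ref{lem:eqexp} then yields $u(\sigma_i,\feas)=u(\sigma_i,\sigma_{-i})$ for every $\sigma_i\in\Sigma_i$ and every $\feas$ in the limiting feasible set $\mathcal{F}$; that is, every consistent opponent is indistinguishable from $\sigma_{-i}$ in expected utility against all of our strategies.

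For the second limit I would invoke the CCFR guarantees with the opponent in the constrained (maximizing) role and ourselves unconstrained, as permitted by the footnote convention. The constraints are linear, so $\mathcal{F}$ is convex, and it is nonempty in the limit since $\sigma_{-i}$ satisfies the equalities; thus the minimal infeasibility $f^*$ vanishes as $n\to\infty$ and the feasibility bound (\ref{eqcon}) drives $\overline{\feas}^T$ into $\mathcal{F}$. The exploitability bound (\ref{eqexp}) of Theorem~\ref{thm:infcon} then makes the saddle-point gap between $\max_{\sigma_i^*\in\Sigma_i}u(\sigma_i^*,\overline{\feas}^T)$ and $\min_{\feas\in\mathcal{F}}u(\overline{\sigma}_i^T,\feas)$ vanish as $T\to\infty$ (after the zero-sum relabelling). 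Substituting the utility-equivalence of $\mathcal{F}$, the first term equals $\max_{\sigma_i^*}u(\sigma_i^*,\sigma_{-i})$ and the second equals $u(\overline{\sigma}_i^T,\sigma_{-i})$, so $\max_{\sigma_i^*}u(\sigma_i^*,\sigma_{-i})-u(\overline{\sigma}_i^T,\sigma_{-i})\to0$, i.e. $\sigma_i$ converges to a best response to $\sigma_{-i}$.

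The hard part will be making the double limit rigorous in the presence of finite-sample infeasibility. At finite $n$ the Wilson constraints are only approximately and not simultaneously feasible, so the clean equivalence holds strictly only in the limit; the argument must route through Theorem~\ref{thm:infcon}, showing that the minimum-$L_1$-infeasibility relaxation of the constraint set converges as $n\to\infty$ to a subset of the utility-equivalent strategies and does not admit spurious opponents through the slack, and that the CCFR errors in (\ref{eqcon})--(\ref{eqexp}) vanish uniformly enough that the limits in $n$ and $T$ can be taken in sequence. Establishing this non-drift of the relaxed feasible set, together with continuity of $u$ in $\feas$ so that near-feasible averages $\overline{\feas}^T$ inherit the equivalence up to a vanishing error, is the crux.
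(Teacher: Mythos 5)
Your proposal follows essentially the same route as the paper's own proof: take $n\to\infty$ so that the Wilson intervals collapse constraints (\ref{eq:phinner}) and (\ref{eq:phterm}) into the equality hypotheses of Lemma~\ref{lem:eqexp}, conclude that every feasible opponent strategy is utility-equivalent to $\sigma_{-i}$, and then invoke the CCFR guarantee of Theorem~\ref{thm:ccfrex} as $T\to\infty$ to convert near-optimality against the feasible set into the best-response property against $\sigma_{-i}$. The only divergence is that you explicitly flag the finite-$n$ infeasibility and order-of-limits issues and propose routing them through Theorem~\ref{thm:infcon}; the paper's own proof simply takes the limits sequentially and passes over these points in silence, so your extra care is a strengthening rather than a departure.
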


\begin{proof}
As $n$ goes to infinity, each $\hat{\pi}_i^{\sigma_i}(h)$ converges to $\pi_i^{\sigma_i}(h)$, and each confidence interval bound $L$ or $U$ converges to the true expectation of the estimated statistic. Thus, our constraints converge to equality constraints and any feasible strategy $\feas \in \Sigma_{-i}$ must satisfy
\begin{alignat*}{2}
\sum_{\opp \in \Priv_{-i}} \pi_c(\our,\opp,\pub)\pi^{\oppstrat}_{-i}(\opp,\pub) &= \pi_c(\our,\opp,\pub)\pi^{\feas}_{-i}(\opp,\pub)\qquad &&\forall \our \in \Priv_i,\forall \pub \in \Pubtermu\\
\pi^{\oppstrat}_{-i}(\opp,\pub) &= \pi^{\feas}_{-i}(\opp,\pub) &&\forall \opp \in \Priv_{-i}, \forall \pub \in \Pubterm \setminus \Pubtermu.
\end{alignat*}
By Lemma~\ref{eqexp} we have that $u(\sigma_i,\feas) = u(\sigma_i,\sigma_{-i})$ for any $\sigma_i \in \Sigma_i$. By Theorem~1, the average strategy $\overline{\sigma}_i^T$ converges to a best response to $\overline{\sigma}_{-i}^T$, and $\overline{\sigma}_{-i}^T$ converges to the feasible set. Thus
\begin{equation}
u(\overline{\sigma}_i^T,\sigma_{-i}) = u(\overline{\sigma}_i^T,\overline{\sigma}_{-i}^T) \geq   u(\sigma_i',\overline{\sigma}_{-i}^T) = u(\sigma_i',\sigma_{-i})
\end{equation}
and $\overline{\sigma}_i^T$ is a best response to $\sigma_{-i}$.
\end{proof}

\end{document}